\newcommand{\eqmean}[2]{\langle #1 \rangle^{(\text{eq},#2)}}
\newcommand{\mean}[1]{\langle #1 \rangle}
\newcommand{\ci}{{\rm i}}
\newcommand{\rmd}{{\rm d}}
\newcommand{\rme}{{\rm e}}
\newcommand{\Z}{{\mathbb Z}}
\newcommand{\R}{{\mathbb R}}
\newcommand{\vep}{\varepsilon}
\newcommand{\cf}{\mathbbm{1}}
\newcommand{\re}{{\rm Re\,}}
\newcommand{\im}{{\rm Im\,}}
\newcommand{\muss}{\mu_{\text{s,s}}}
\newcommand{\musv}{\mu_{\text{s,v}}}
\newcommand{\nessvf}[1]{\langle #1 \rangle_{\text{s,v}}}
\newcommand{\nesssc}[1]{\langle #1 \rangle_{\text{s,s}}}
\numberwithin{equation}{section}
\newcommand\bq{{\mathbf q}}
\newcommand\bp{{\mathbf p}}
\newcommand\br{{\mathbf r}}
\newcommand\RR{{\mathbb R}}
\newcommand\ZZ{{\mathbb Z}}
\newcommand\bT{{\mathbf T}}
\newcommand\ve{\varepsilon}
\newcommand{\mc}[1]{{\mathcal #1}}
\newcommand{\bb}[1]{{\mathbb #1}}
\newtheorem{prop}{Proposition}
\newtheorem{lemma}{Lemma}
\newtheorem{cor}{Corollary}
\title[Harmonic Chains with Bulk Noises]{Harmonic Systems with Bulk Noises}
\date{\today}
\keywords{Non-equilibrium systems, stationary states, long-range correlations, fluctuations, hydordynamics}%Use showkeys class
\begin{document}

\begin{abstract}
We consider a harmonic chain in contact with thermal reservoirs at different temperatures and subject to bulk noises of different types: velocity flips or self-consistent reservoirs.  While both systems have the same covariances in the nonequilibrium stationary state (NESS) the measures are very different. We study hydrodynamical scaling, large deviations, fluctuations, and long range correlations in both systems. Some of our results extend to higher dimensions. 
\end{abstract}

\author{C.~Bernardin}
\address{C. Bernardin\\
Universit\'e de Lyon and CNRS,UMPA, UMR-CNRS 5669, ENS-Lyon,\\
46, all\'ee d'Italie, \\
69364 Lyon Cedex 07 \\
France
}
\email{Cedric.Bernardin@umpa.ens-lyon.fr}

\author{V.~Kannan}
\address{Venkateshan Kannan\\ Department of Mathematics and Physics\\ Rutgers University\\
Piscataway, NJ\\ USA}
\email{kven@physics.rutgers.edu}

\author{J.~L.~Lebowitz}
\address{Joel L. Lebowitz\\ Department of Mathematics and Physics\\ Rutgers University\\
Piscataway, NJ\\ USA}
\email{lebowitz@math.rutgers.edu}

  \author{J.~Lukkarinen}
\address{Jani Lukkarinen\\
Department of Mathematics and Statistics\\
University of Helsinki\\
P.O.~Box 68\\
00014~Helsingin yliopisto\\
Finland
}
\email{jani.lukkarinen@helsinki.fi}

\maketitle

\section{Introduction}
\label{sec:models}
We consider a harmonic chain consisting of $N$ oscillators. The positions and momenta of the oscillators are denoted by $q_j$ and $p_j$ respectively. The Hamiltonian is given by
\begin{equation}
\label{eq:energy}
{\mc H} =\sum_{j=0}^{N+1} \cfrac{p_j^2}{2}\;  + \;  \nu^2 \sum_{j=0}^{N+1} \cfrac{ q_j^2}{2} \; +\sum_{\substack{|i-j|=1,\\ i,j \in \{0, \ldots, N+1\}}}\cfrac{(q_{j}-q_i)^2}{4}
\end{equation}
where we have set the mass of each particle and the nearest neighbor coupling equal to $1$. We impose boundary conditions $q_0=q_{N+1}=0$, $p_0=p_{N+1}=0$. The strength of the pinning potential is regulated by the parameter $\nu \ge 0$.
The energy of site $j$ is given  by
\begin{equation}
\label{eq:energyj}
{\mc E}_j = \cfrac{p_j^2}{2} + \nu^2 \cfrac{q_j^2}{2} + \cfrac{1}{4} \sum_{i:\, |i-j|=1} (q_i -q_j)^2.
\end{equation}

In the absence of any stochastic noise the generator of the dynamics is given by the Liouville operator ${\mc A}$
 \begin{equation}
 \label{eq:liouvilleA}
{\mc A} =\sum_{j=1}^N \left\{ p_j \partial_{q_j} - \left[ (\nu^2 -\Delta) q\right]_j \partial_{p_j} \right\}
\end{equation}
where $\Delta$ is the discrete Laplacian with Dirichlet boundary conditions.

%The system is in contact with two Langevin reservoirs at temperature $T_\ell$ at site $1$ and $T_r$ at site $N$.

 We shall now consider two ways of adding noise to the system. The first case we consider is the so-called self-consistent model \cite{BLL04,BLLO}. Each site $j \in \{1,\ldots,N\}$ is connected to a Langevin reservoir at temperature $T_j >0$. The generator of the Langevin reservoir at temperature $T_j$ acting on the particle at $j$ is given by
\begin{equation}
\label{eq:thermostat}
{\mc B}_{j,T_j} = T_j \partial_{p_j}^2 - p_j \partial_{p_j}.
\end{equation}
The temperatures of the reservoirs at the boundary sites are fixed by the conditions  $T_1=T_\ell$, $T_N =T_r$, while the temperatures of the interior site reservoirs are determined self-consistently, by requiring that in the non-equilibrium stationary state (NESS) there is no net flux of energy between the system and the interior reservoirs.

The second model we consider is one in which only particle $1$ and particle $N$ are in contact with heat baths. In the bulk we add a flip dynamics which consists of reversing the velocity of each particle at random independent Poissonian times.  These flips are energy conserving and we call the model the velocity flip model.

In both situations the generator of the total dynamics can be written as
\begin{equation}
\label{eq:generatorL}
{\mc L} = {\mc A} +\gamma {\mc S} + {\mc B}_{1,T_\ell} + {\mc B}_{N,T_r}
\end{equation}
where $\gamma>0$ is the intensity of the bulk noise generator ${\mc S}$. The phase space is $\Omega_N = \RR^N \times \RR^N$ and a configuration $(q_1, \ldots, q_N, p_1, \ldots, p_N)$ is denoted by $\omega=(\bq, \bp)$. The configuration at time $t$ of the process is denoted by $\omega (t)$.

For the self-consistent chain ${\mc S}$ has the following expression
\begin{equation*}
{\mc S}= \sum_{j=2}^{N-1} {\mc B}_{j, T_j}\, ,
\end{equation*}
%The temperatures $T_j$ are chosen according to the following self-consistency condition.
It can be shown that for any given temperatures ${\bT} := \{ T_j>0\, ; \, j=1, \ldots,N\}$ there exists a unique steady state $\muss ^{N,{\bT}}$ (which depends on the $T_j$'s) for the dynamics generated by ${\mc L}$. Moreover, it can be proved that there exists a unique ${\bT}:={\bT}^{\rm{sc}}$ such that the self-consistency condition $\muss ^{N,\bT} (p_j^2) = T_j$ for $j=2, \ldots,N-1$ is satisfied.  (If $\nu>0$, the proof is given in \cite{BLL04}; the case $\nu=0$ is treated later, in Section \ref{sec:scmunpinned}.)  In the sequel we fix the temperatures to follow the self-consistent profile ${\bT}$ and denote the corresponding stationary state by $\muss^N$ or $\nesssc{\cdot}$.

For the velocity flip model the generator ${\mc S}$ is given by
\begin{equation*}
({\mc S}f)(\omega)=\cfrac{1}{2}\sum_{j=1}^N \left[ f(\omega^j) -f(\omega)\right]
\end{equation*}
where $\omega^j$ is the configuration obtained from $\omega=(\bq,\bp)$ by flipping $p_j$ to $-p_j$.
Following the lines of the proof given in \cite{BO} or \cite{BO2} one can prove that the velocity flip model has a unique non-equilibrium stationary state denoted by $\musv^N$ or $\nessvf{\cdot}$.

%Since the notation can be confusing when we consider the two models together we reserve in the latter case  the notation
%$\nessvf{\cdot}$ for the NESS of the velocity flip model and introduce the notation
%$\nesssc{\cdot}$ to denote the NESS of the self-consistent chain.
%Otherwise we forget the prime.

The outline of the rest of the paper is as follows. In Section \ref{sec:ss} we investigate the similarities between the NESS of the self-consistent model and the NESS of the velocity-flip model and we show that while the two probability measures have the same covariances, $\nesssc{\cdot}$ is Gaussian but $\nessvf{\cdot}$ is a non-trivial mixture of Gaussian states. In Section \ref{sec:hl} we obtain the hydrodynamical equations for the velocity flip model. In Section \ref{sec:cor} we investigate the large deviation fuction and the stationary fluctuations of the energy. In particular, we show that long-range correlations are present in the state $\nessvf{\cdot}$. This is very different from the NESS of the self-consistent chain: we show in Section \ref{sec:corsc} that long-range correlations are not present in the state $\nesssc{\cdot}$. 
%Section \ref{sec:num}  discusses the situation for higher dimensional systems. 
Some technical matters are deferred to the Appendices. A summary of our results
%, including numerical studies of the energy fluctuations in the velocity flip model 
is given in \cite{BKLL11a}.
%Section \ref{sec:conc}.

\section{The steady states}
\label{sec:ss}

If the temperatures $T_\ell$ and $T_r$ are equal to a common value $T$, the steady state of the self-consistent chain and the steady state of the velocity flip model are both equal to the Gibbs state with temperature $T$ that we denote by $\mu_T^N$ or $\eqmean{\cdot}{T}$. This is a Gaussian measure with covariance $C_{{\rm{eq}}} (T) = T C_{{\rm {eq}}} (1)$.
%We recall that  $\nessvf{\cdot}$ denotes the NESS of the velocity flip model and $\nesssc{\cdot}$ denotes the NESS of the self-consistent chain.

In contrast, for $T_\ell \ne T_r$, the NESS of the two models are different.  However,
it is easy to see that both stationary states are centered, and by the results derived in \cite{DKL}, we know that the two point correlation functions of both models coincide.  To summarize the result, let us introduce the notation $\mean{A;B;\cdots}$ to denote the {\em cumulants\/} of random variables $A,B,\ldots$;
for instance, $\mean{A;B}=\mean{AB}-\mean{A}\mean{B}$.
\begin{prop}[\cite{DKL}]
\label{prop:DKL}
We have
\begin{equation*}
\nessvf{A ; B} = \nesssc{A ; B} \, ,
\end{equation*}
for any two random variables $A$ and $B$ linear in $\{q_1, \ldots, q_N, p_1, \ldots, p_N\}$.
\end{prop}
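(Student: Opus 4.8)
The plan is to reduce the statement to the equality of the full covariance matrices and then to show that both matrices are forced to solve the same closed linear system. Since both measures are centered (the generators commute with the global sign flip $\omega\mapsto-\omega$), for linear $A=\sum_k a_k\omega_k$ and $B=\sum_l b_l\omega_l$ one has $\mean{A;B}=\sum_{k,l}a_kb_l\,\mean{\omega_k\omega_l}$, so it suffices to prove $\nesssc{\omega_k\omega_l}=\nessvf{\omega_k\omega_l}$ for all $k,l$. I therefore organize the second moments into the three blocks $u_{ij}=\mean{q_iq_j}$, $z_{ij}=\mean{q_ip_j}$, $w_{ij}=\mean{p_ip_j}$, and aim to show that the triple $(u,z,w)$ obeys the same equations in the two models.

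First I would write down the stationarity relations $\mean{\mc L\,\omega_a\omega_b}=0$, valid for every quadratic monomial once finiteness of the second moments is granted. Writing $\Phi=\nu^2-\Delta$, the Hamiltonian generator acts identically in both models through $\mc A(q_iq_j)=p_iq_j+q_ip_j$, $\mc A(q_ip_j)=p_ip_j-q_i[\Phi q]_j$ and $\mc A(p_ip_j)=-p_j[\Phi q]_i-p_i[\Phi q]_j$, and the two models share the same boundary generators $\mc B_{1,T_\ell}+\mc B_{N,T_r}$. Hence only the bulk noise $\gamma\mc S$ must be compared. A direct computation of the action on quadratic monomials shows that the two bulk noises induce identical damping on the correlators $\mean{q_ip_j}$ and on the off-diagonal $\mean{p_ip_j}$ ($i\neq j$), since flipping a momentum simply reverses the sign of these monomials. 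The single point of disagreement is the diagonal term $\mean{p_j^2}$: a self-consistent reservoir contributes $2\gamma(T_j-\mean{p_j^2})$, whereas a velocity flip leaves $p_j^2$ invariant and contributes nothing. Consequently every stationarity equation coincides in the two models except the diagonal equation $\mean{\mc L\,p_j^2}=0$ at a bulk site $j$.

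This remaining discrepancy is removed precisely by the self-consistency condition. In the self-consistent chain the bulk equation reads $-2[\Phi z]_{jj}+2\gamma\,(T_j-\nesssc{p_j^2})=0$, while in the velocity-flip model it reads $-2[\Phi z]_{jj}=0$; but $\nesssc{p_j^2}=T_j$ for $j=2,\dots,N-1$ by the very definition of the self-consistent profile, so the extra term vanishes and the self-consistent covariance satisfies the velocity-flip relation as well. Thus the triple $(u,z,w)$ computed in $\nesssc{\cdot}$ solves the entire stationary system of the velocity-flip model, and the proof is completed by invoking uniqueness of the solution of that linear system. I expect this last point to be the main obstacle: one must show that the closed linear equations for the second moments admit a unique solution, equivalently that the induced linear map on symmetric matrices is invertible. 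This reflects the fact that the boundary baths together with the bulk noise dissipate every mode and that each dynamics has a unique NESS with finite second moments — the a priori bound also being what legitimizes writing $\mean{\mc L\,\omega_a\omega_b}=0$ in the first place. Granting this, the two covariance matrices coincide and the proposition follows.
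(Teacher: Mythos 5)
Your overall strategy --- reduce to the closed linear system for the second moments, show the two systems coincide once the self-consistency condition $\nesssc{p_j^2}=T_j$ is imposed, then invoke uniqueness --- is the natural route, and it is essentially the argument behind the reference \cite{DKL} to which the paper delegates this proposition (the paper gives no proof of its own). However, your key matching claim fails under the definitions used in this paper, and it fails precisely at the boundary. Here the velocity-flip noise $\gamma {\mc S}$ acts at \emph{all} sites $j=1,\ldots,N$, whereas the self-consistent bulk noise $\gamma\sum_{j=2}^{N-1}{\mc B}_{j,T_j}$ acts only at the interior sites. Consequently the stationarity equation for $\mean{q_ip_j}$ with $j\in\{1,N\}$ carries damping $-(\gamma+1)\mean{q_ip_j}$ in the flip model but only $-\mean{q_ip_j}$ in the self-consistent model, and likewise the off-diagonal $\mean{p_ip_j}$ equations with $i$ or $j$ in $\{1,N\}$ have mismatched coefficients ($-2\gamma$ from the flips versus $-\gamma$ times the number of \emph{interior} indices among $i,j$). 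These discrepancies do not vanish in the NESS: by the $qq$ stationarity relations $\mean{q_ip_j}+\mean{q_jp_i}=0$, the correlator $\mean{q_2p_1}=-\mean{q_1p_2}$ is (up to sign) the stationary energy current, which is nonzero when $T_\ell\neq T_r$. So the two linear systems are genuinely different and, since each has a unique solution, their solutions differ at finite $N$. This is explicit for $N=2$, where the self-consistent model has no bulk noise at all: solving both systems gives $\nesssc{q_1p_2}=\frac{T_\ell-T_r}{2(\nu^2+3)}$ but $\nessvf{q_1p_2}=\frac{T_\ell-T_r}{2\left[(1+\gamma)(\nu^2+2)+1\right]}$.

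What this means is that the identity of \cite{DKL} is exact only when the supports of the two bulk noises match: either the flips must be restricted to $j=2,\ldots,N-1$, or every site (including $1$ and $N$) must carry a self-consistent reservoir of strength $\gamma$, the boundary sites then having, in addition to the fixed baths ${\mc B}_{1,T_\ell}$, ${\mc B}_{N,T_r}$, their own reservoir whose temperature is also fixed self-consistently to $\mean{p_j^2}$. With either convention your bulk computation is correct, the diagonal mismatch is removed by self-consistency exactly as you say, and what remains is the uniqueness of the solution of the Lyapunov-type system (which follows from stability of the drift and is less of an obstacle than you fear). As written, though, the sentence ``the two bulk noises induce identical damping on $\mean{q_ip_j}$ and on the off-diagonal $\mean{p_ip_j}$'' is where the proof breaks: you must either adopt the matching convention explicitly, or show how the boundary terms are to be handled, before the uniqueness step can be applied.
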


It follows from Proposition \ref{prop:DKL} that both models satisfy Fourier's law with the same value of the conductivity. Indeed, since the energy current
\begin{equation*}
j_{x}^e = -\cfrac{1}{2} (p_x +p_{x-1}) (q_{x} -q_{x-1} ), \qquad x=2, \ldots, N-1 \, ,
\end{equation*}
is a quadratic function of the momenta and positions, its averages over $\nesssc{\cdot}$ and $\nessvf{\cdot}$ are equal. This implies that the conductivity of the self-consistent model, $\kappa_{\rm{s,s}}(T)$, and the conductivity of the velocity-flip model, $\kappa_{\rm{s,v}}(T)$, both defined by
\begin{equation*}
\kappa_{\rm{s,\cdot}}(T)= \lim_{T_{L}, T_r \to T} \, \lim_{N \to \infty}\, \cfrac{N \langle{j_x^e}\rangle_{\rm{s}, \cdot } }{T_\ell -T_r}
\end{equation*}
are equal to the same value $\kappa (T)$. The existence of these limits have been proved in \cite{BLL04} for the pinned ($\nu >0$) self-consistent chain and in \cite{BO2} for the unpinned ($\nu =0$) velocity-flip model. By using Proposition \ref{prop:DKL}, we get the existence and equality of these limits for both models in the pinned and unpinned cases.  The value of $\kappa (T)$ is given by
\begin{equation}
\label{eq:K}
\kappa (T) =\cfrac{1/\gamma}{2+\nu^2 + \sqrt{\nu^2 (\nu^2 +4)}}.
\end{equation}

\newcommand{\covmeas}{\rho_{\text{s,cov}}}

The steady state $\nesssc{\cdot}$ of the self-consistent chain is Gaussian. The probability measure
$\nessvf{\cdot}$ is not Gaussian but we will show it is a mixture of Gaussian states.
\begin{prop}
\label{prop:ss}
There exists a probability measure $\covmeas$ whose support is included in the set $\Sigma$ composed of positive definite symmetric matrices, such that $\nessvf{\cdot}$ is given by
\begin{equation*}
\nessvf{\cdot} = \int G_C (\cdot)  \, \covmeas(dC) .
\end{equation*}
\end{prop}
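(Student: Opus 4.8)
The plan is to exhibit $\nessvf{\cdot}$ as the law of $\omega(0)$ in a stationary version of the velocity-flip dynamics, obtained by first sampling the flip realization and then conditioning on it; the point is that once the flip times are frozen the evolution is linear with additive Gaussian forcing, so the conditional law of the configuration is Gaussian. First I would realize the bulk noise through $N$ independent Poisson clocks of rate $\gamma/2$, one per site, and denote by $\xi$ the resulting flip history, i.e.\ the collection of flip times at each site. The reflections $\omega \mapsto \omega^j$ are linear; write $J_j = I - 2 e_j e_j^\top$ for the corresponding orthogonal map, where $e_j$ is the unit vector in the $p_j$ direction. Between two consecutive flips the remaining part ${\mc A} + {\mc B}_{1,T_\ell} + {\mc B}_{N,T_r}$ of the generator is that of an Ornstein--Uhlenbeck process
\[
\rmd \omega(t) = \mathsf{A}\, \omega(t) \, \rmd t + \mathsf{B}\, \rmd W(t),
\]
where $\mathsf A$ collects the Hamiltonian drift together with the boundary friction $-p_1,-p_N$ and $\mathsf B\,\rmd W$ is the Gaussian reservoir forcing. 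Composing the OU flow with the reflections $J_j$ at the flip times of $\xi$ yields, for each realization of $\xi$, a time-inhomogeneous linear propagator $\Phi(t,s;\xi)$ and the representation $\omega(t) = \Phi(t,s;\xi)\, \omega(s) + \int_s^t \Phi(t,u;\xi)\, \mathsf B \, \rmd W(u)$.

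Next I would pass to the stationary regime by letting $s \to -\infty$. Because $\mathsf A$ is exponentially stable---which is exactly what underlies the existence and uniqueness of the NESS following \cite{BO,BO2}---and because the reflections $J_j$ are orthogonal and hence cannot inject energy, the propagator satisfies $\Phi(0,s;\xi) \to 0$ as $s\to -\infty$ for almost every $\xi$. The initial term then drops out and, conditionally on $\xi$,
\[
\omega(0) = \int_{-\infty}^0 \Phi(0,u;\xi)\, \mathsf B\, \rmd W(u)
\]
is a centered Gaussian with covariance
\[
C(\xi) = \int_{-\infty}^0 \Phi(0,u;\xi)\, \mathsf B \mathsf B^\top \, \Phi(0,u;\xi)^\top \, \rmd u .
\]
Thus the conditional law of $\omega(0)$ given $\xi$ is $G_{C(\xi)}$. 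Finiteness of $C(\xi)$ for a.e.\ $\xi$ is immediate, since $\EE_\xi[\operatorname{tr} C(\xi)] = \nessvf{|\omega|^2} < \infty$ by Proposition \ref{prop:DKL}.

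Defining $\covmeas$ as the law of the random matrix $C(\xi)$ under the stationary flip process, the disintegration of the unique stationary measure over $\xi$ gives
\[
\nessvf{\cdot} = \EE_\xi\big[ G_{C(\xi)}(\cdot)\big] = \int G_C(\cdot)\, \covmeas(\rmd C),
\]
which is the claimed mixture. It remains to check that $\covmeas$ is supported in $\Sigma$: each $C(\xi)$ is symmetric and positive semidefinite by construction, and it is in fact positive definite because a.e.\ realization $\xi$ has a last flip time $\tau<0$, so that $C(\xi) \ge \int_\tau^0 \rme^{\mathsf A(-u)}\mathsf B \mathsf B^\top \rme^{\mathsf A^\top(-u)}\rmd u$, and the latter is strictly positive definite by the standard controllability of the harmonic chain driven from its two ends.

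The main obstacle is the stationary step: making rigorous the almost-sure decay $\Phi(0,s;\xi)\to 0$ and the consequent identification of $\EE_\xi[G_{C(\xi)}]$ with the genuine NESS $\musv^N$. One must control the random product of OU flows and reflections uniformly enough to justify the $s\to-\infty$ limit and to exchange it with the conditional expectation; the orthogonality of the $J_j$ together with the exponential dissipativity of $\mathsf A$ are the two ingredients that make this possible, whereas symmetry and positive definiteness of the mixing matrices come essentially for free.
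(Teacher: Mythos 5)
Your construction is genuinely different from the paper's, although both start from the same observation: conditionally on the realization of the Poisson flips, the dynamics is linear with additive Gaussian forcing, so the conditional law of the configuration is a centered Gaussian whose covariance evolves by the two-bath flow (\ref{eq:diffC}) interrupted by conjugations with the reflection matrices. The paper turns this observation into a forward-in-time, ``soft'' argument: it promotes the covariance evolution to a Markov process $(C(t))_{t\ge 0}$ on matrices (a dual process of the velocity-flip dynamics), proves tightness of its laws by transporting the Lyapunov bound $\sup_t \EE_{\omega_0}[\rme^{\alpha \mc{H}(\omega(t))}] \le K_0(\rme^{\alpha\mc{H}(\omega_0)}+1)$ of \cite{RB,BO2} (flips preserve $\mc{H}$, so the two-bath bound survives), extracts an invariant mixing measure $\covmeas$ as a Ces\`aro (Krylov--Bogoliubov) limit, and identifies the mixture with the NESS by uniqueness. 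You instead build the mixing measure pathwise, as the law of the pull-back covariance $C(\xi)=\int_{-\infty}^0 \Phi(0,u;\xi)\,\mathsf{B}\mathsf{B}^\top \Phi(0,u;\xi)^\top \rmd u$. Your route is more constructive, and your positive-definiteness argument via the controllability Gramian accumulated after the last flip is cleaner and more quantitative than the paper's appeal to absolute continuity of the NESS; but it is analytically more demanding, and that is where the gap lies.

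The gap is the almost-sure decay $\Phi(0,s;\xi)\to 0$, which you correctly single out as the main obstacle but support only by ``exponential stability of $\mathsf{A}$'' plus ``orthogonality of the $J_j$''. As stated this is not enough: exponential stability gives $\Vert \rme^{\mathsf{A}t}\Vert\le K \rme^{-\lambda t}$ with, in general, $K>1$, and multiplying such bounds along the random product yields $\Vert\Phi(0,s;\xi)\Vert\le K^{n}\rme^{-\lambda|s|}$ with $n\sim \gamma N|s|/2$ flips in $[s,0]$, which diverges once $\gamma N \ln K > 2\lambda$; Euclidean orthogonality of the $J_j$ does not repair this. The correct way to combine your two ingredients is to work in the energy norm: $\rme^{\mathsf{A}t}$ is a contraction for $\mc{H}$ (since $\frac{\rmd}{\rmd t}\mc{H} = -p_1^2-p_N^2\le 0$ along the noiseless flow), the flips are isometries for $\mc{H}$, hence $\Vert\Phi(0,s;\xi)\Vert_{\mc H}\le 1$ uniformly and is non-increasing as $s\to-\infty$; moreover, by observability of the chain from its boundary momenta, $\Vert \rme^{\mathsf{A}T}\Vert_{\mc H}=\rho_T<1$ for any fixed $T>0$, and since windows of length $T$ containing no flip have probability $\rme^{-\gamma N T/2}$ and are independent over disjoint windows, infinitely many such windows occur in $(-\infty,0]$ almost surely, forcing $\Vert\Phi(0,s;\xi)\Vert_{\mc H}\le \rho_T^{k(s)}\to 0$. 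A second, smaller defect: your finiteness claim for $C(\xi)$ ``by Proposition \ref{prop:DKL}'' is circular, since it presupposes the identification of the mixture with the NESS that you are in the process of proving; with the decay established as above, finiteness is automatic. Once these repairs are made your argument closes; note that the paper's dual-process route is designed precisely so that no quenched decay of the random propagator is ever needed.
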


\begin{proof}
See Appendix \ref{sec:ness-mixture}.
\end{proof}

There is a simple non-trivial consequence of this property:
\begin{cor} Let $f(\bq, \bp)$ be any function of the form
$$f(\bq,\bp)=\sum_{i=1}^N  \left(a_i q_i + b_i p_i\right) $$
where $(a_1,\ldots,a_N, b_1, \ldots,b_N)$ are arbitrary real numbers. Then, the following inequality holds:
\begin{equation*}
\nessvf{f^4} \ge 3 \left[ \nessvf{f^2} \right]^2 = \nesssc{f^4}.
\end{equation*}
\end{cor}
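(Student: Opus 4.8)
The plan is to prove the two assertions separately: the equality $3\left[\nessvf{f^2}\right]^2 = \nesssc{f^4}$ on the right, and then the inequality $\nessvf{f^4} \ge 3\left[\nessvf{f^2}\right]^2$ on the left.

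For the equality I would first observe that $f$ is linear, so under the \emph{Gaussian} measure $\nesssc{\cdot}$ it is a centered Gaussian random variable (recall that both states are centered). The Isserlis--Wick fourth-moment identity for a centered Gaussian then gives $\nesssc{f^4} = 3\left[\nesssc{f^2}\right]^2$. It remains to replace $\nesssc{f^2}$ by $\nessvf{f^2}$. Since both measures are centered we have $\mean{f^2} = \mean{f;f}$ in each state, and Proposition \ref{prop:DKL} asserts exactly that these second cumulants of the linear observable $f$ agree; hence $\nesssc{f^2} = \nessvf{f^2}$ and the equality follows.

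For the inequality I would invoke the mixture representation of Proposition \ref{prop:ss}, writing $\nessvf{\cdot} = \int G_C(\cdot)\,\covmeas(dC)$ with $G_C$ the centered Gaussian of covariance $C$. Under each component $G_C$ the linear observable $f$ is again a centered Gaussian, now with some variance $\sigma^2(C) := \mean{f^2}_{G_C}$, so that $\mean{f^4}_{G_C} = 3\,\sigma^2(C)^2$ by the same Gaussian identity. Integrating against $\covmeas$ yields $\nessvf{f^4} = 3\int \sigma^2(C)^2\,\covmeas(dC)$ and $\nessvf{f^2} = \int \sigma^2(C)\,\covmeas(dC)$. The desired inequality then reduces to
\begin{equation*}
\int \sigma^2(C)^2\,\covmeas(dC) \ge \left( \int \sigma^2(C)\,\covmeas(dC) \right)^{2},
\end{equation*}
which is Jensen's inequality (equivalently, non-negativity of the variance of the random variable $C \mapsto \sigma^2(C)$ under the probability measure $\covmeas$).

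There is no serious obstacle here: the statement is a clean consequence of the mixture structure. The only conceptual point to get right is the identification of the second moments across the two states via Proposition \ref{prop:DKL}, which forces the right-hand equality; the left-hand inequality is simply the dispersion of the Gaussian variances $\sigma^2(C)$ in the mixture, becoming an equality precisely when $\covmeas$ is concentrated on a single covariance along the direction of $f$ (i.e.\ when $\nessvf{\cdot}$ is itself Gaussian for the observable $f$).
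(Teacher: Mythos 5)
Your proof is correct and follows essentially the same route as the paper's: both rest on the Gaussian fourth-moment identity applied under each component $G_C$ of the mixture representation from Proposition \ref{prop:ss}, followed by Jensen's inequality for the measure $\covmeas$, with Proposition \ref{prop:DKL} (equality of second moments of centered linear observables) yielding the right-hand equality. The only difference is that you spell out the steps---in particular the identification $\nesssc{f^2}=\nessvf{f^2}$ and the reduction to the variance inequality for $C\mapsto\sigma^2(C)$---which the paper's three-line proof leaves implicit.
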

\begin{proof}
If $Z \in \RR^d$, $d \ge 1$, is a centered Gaussian random variable then ${\bb E} (Z^4) = 3 \left[ {\bb E} (Z^2) \right]^2$. Thus, for any Gaussian measure $G_C$ with covariance matrix $C$, we have $G_C (f^4) =3 \left[ G_C (f^2) \right]^2$. By Proposition \ref{prop:ss} and Jensen's inequality, the result follows.
\end{proof}

\section{Hydrodynamical scaling limit of the velocity flip model}
\label{sec:hl}

We have to distinguish two cases according to whether $\nu=0$ (unpinned) or $\nu>0$ (pinned). The unpinned case is similar to that investigated by one of the authors in \cite{Ber}.

\subsection{The unpinned chain}
\label{subsec:uc}

When $\nu=0$ it is useful to define the more convenient coordinates $r_x, \; 0 \le x \le {N}$ by  $r_x =q_{x+1} -q_x$, and set $r_{N+1}=0$. The phase space is then identified (with some abuse of notation) as $\Omega_N= {\RR}^{N+1} \times {\RR}^{N}$ and a configuration $(r_0, \ldots, r_N, p_1, \ldots,p_N)$ is denoted by $\omega=(\br, \bp)$. The configuration at time $t$ of the process is denoted by $\omega (t)$.

In terms of these new variables the Liouville operator can be written as
\begin{equation*}
  \mc A = \sum_{x=0}^{N}\left(p_{x+1} - p_{x} \right) \partial_{r_x}  +
 \sum_{x=1}^{N}  \left( r_{x} -r_{x-1} \right)  \partial_{p_x}
\end{equation*}
The {\textit{bulk}} dynamics conserves two quantities. The first one is the energy ${\mc H}$. The second one is the deformation $\sum_{x} r_x$ of the lattice. Because of the Langevin baths the total energy is no longer conserved by the {\textit{total}} dynamics; the situation is different for the total deformation  $\sum_{x=0}^{N} r_x$ since our boundary conditions $q_0=q_{N+1}=0$ fix it to $0$ at any time. Nevertheless, the field $\br(t)=\{r_x (t)\, ; \, 0 \le x \le N \}$ fluctuates, and this second conservation law has to be taken into account in the hydrodynamic analysis. Observe that the conservation of total momentum is no longer valid because it is broken by the noise.

%\com{(CB) I don't understand. With the boundary conditions $q_0=q_{N+1}=0$ we have exactly that at any time $t$, ${\mc R} (t) =\sum_{x=0}^{N} r_x (t) =q_{N+1} (t) -q_0 (t)= 0= {\mc R} (0)$. No ?}
%
%\com{(Jani)  Sorry, a similar text was already in the previous version of the file, probably I misunderstood its meaning. I thought you meant to explain why one should considered also ${\mc R}\ne 0$ in the equilibrium states (the reason being that this is still relevant for the local equilibrium calculations).  Please fix or remove, as you wish.}

The energy at a site $x \in \{1, \ldots, N\}$ is now given by
\begin{align}\label{eq:defunpex}
{\mc E}_x = \cfrac{p_x^2}{2} +\cfrac{r_x^2}{4} + \cfrac{r_{x-1}^2}{4}
\end{align}
and ${\mc E}_0 = \frac{1}{4} r_0^2\,$, ${\mc E}_{N+1} = \frac{1}{4} r_{N}^2\,$.
The local conservation of the two conserved quantities is expressed by the equations
\begin{equation}
\label{eq:lc}
\begin{split}
{\mc E}_x (t) -{\mc E}_x (0) = \int_0^{t} ds \; \left\{ j_x^e (s) -j_{x+1}^e (s)\right\}ds\\
r_x (t) - r_x (0) = \int_0^{t } ds \; \left\{ j_x^r (s) -j_{x+1}^r (s)\right\}ds\
\end{split}
\end{equation}
where the current of the energy $j_x^e$ and of the current of the deformation of the lattice $j_x^r$ are given by
\begin{equation}
\label{eq:cur}
j_x^e = -\cfrac{1}{2} r_{x-1} (p_x + p_{x-1}), \quad j_x^r = -p_x \, .
\end{equation}

The equilibrium Gibbs measures of the infinite system are parametrized by the inverse temperature $\beta=T^{-1}$ and the pressure (or tension) $\tau$. They are given by
\begin{equation}
\mu_{T,\tau} (\omega) = Z(T,\tau)^{-1} \prod_{x=0}^{N+1} \exp\left\{ -\beta ({\mc E}_x - \tau r_x) \right\}
\end{equation}
and sometimes denoted by $\langle \cdot\rangle_{T,\tau}$. We note that the following relations hold for $x \in \{1, \ldots, N\}$
\begin{equation}
\langle p_x^{2} \rangle_{T,\tau} =T, \quad \langle {\mc E}_x \rangle_{T,\tau} = T + \tau^2 /2, \quad \langle r_x \rangle_{T,\tau} =\tau \, .
\end{equation}

There is a second natural parameterization of the equilibrium measures by the averages of the two conserved quantities. Let $E,R$ be the functions defined by
\begin{equation}
\label{eq:cor}
E(T, \tau) = T +\tau^2 /2, \quad R (T, \tau) = \tau .
\end{equation}
For any ${\bar R} \in \RR$, ${\bar E} > \frac{1}{2} {\bar R}^2$, we denote by ${\hat \mu}_{\bar E, \bar R}$ the canonical equilibrium measure $\mu_{T,\tau}$ with $\tau,T$ such that $E(T,\tau)=\bar E$, $R(T,\tau) ={\bar R}$.  Explicitly, these are given by $\tau={\bar R}$ and $T={\bar E} - \frac{1}{2} {\bar R}^2>0$.

To establish the hydrodynamic limits corresponding to the two conservation laws, we look at  the process with generator $N^{2} {\mc L}$, i.e., in the diffusive scale. Assume that initially the process is started with a Gibbs local equilibrium measure ${\hat \mu}_{\varepsilon_0 (\cdot), u_0 (\cdot)}$ associated with a macroscopic deformation profile $u_0 (q)$ and a macroscopic energy profile ${\varepsilon}_0 (q)$:
\begin{align}\label{eq:locG}
& {\hat \mu}_{\varepsilon_0 (\cdot), u_0 (\cdot)}   \\  &  \quad \nonumber
=
\cfrac{1}{Z (T_0 (\cdot), \tau_0 (\cdot))}
\prod_{x=0}^{N+1} \exp \left\{ -\beta_0 (x/N) ({\mc E}_x - \tau_{0} (x/N) r_x ) \right\}
 dr_x dp_x
\end{align}
where $T_0 = \beta_0^{-1}$ and $\tau_0$ are the temperature and tension profiles corresponding to the given energy and deformation profiles through the correspondence defined in (\ref{eq:cor}). The corresponding expectation is denoted by $\langle \cdot \rangle$.  Both profiles are assumed to be continuous and then
we have for any macroscopic point $q\in [0,1]$
\begin{equation}
\lim_{N \to \infty} \langle r_{[Nq]} (0) \rangle = u_0(q), \quad  \lim_{N \to \infty} \langle {\mc E}_{[Nq]} (0) \rangle = {\varepsilon}_0 (q) \, .
\end{equation}

We show that at any later (macroscopic) time $t$
\begin{equation}
\lim_{N \to \infty} \langle r_{[Nq]} (N^2 t) \rangle = u(q,t), \quad  \lim_{N \to \infty} \langle {\mc E}_{[Nq]} (N^2 t) \rangle = {\varepsilon} (q,t) \, ,
\end{equation}
where $u, \varepsilon$ are solutions of the following macroscopic diffusion equation
\begin{equation}
\label{eq:he}
\begin{cases}
\partial_t u = \gamma^{-1}\,  \partial_q^2 \, u\\
\partial_t {\varepsilon}  = (2 \gamma)^{-1} \, \partial_q^2 \, ( \varepsilon + u^2 /2)
\end{cases}
\end{equation}
with the initial conditions $u(\cdot, 0) =u_0 (\cdot)$, $\varepsilon (\cdot, 0) = {\varepsilon} _0 (\cdot)$ and boundary conditions
\begin{equation}
\label{eq:eqhlboundaries}
\begin{split}
&(\partial_q u) \, (0,t)=(\partial_q u) \, (1,t)=0,\\
&{\left(\ve -\cfrac{u^2}{2}\right) (0,t) = T_\ell, \quad \left(\ve -\cfrac{u^2}{2}\right) (1,t)= T_r\, .}
\end{split}
\end{equation}

Taking the limit $t \to \infty$ in these equations we obtain the typical macroscopic profiles of the system in the nonequilibrium stationary state $\langle \cdot \rangle_{ss}$, i.e., a flat deformation profile $u=0$ and a linear profile ${\bar T}$ interpolating between $T_\ell$ and $T_r$,
\begin{equation}
\label{eq:bT}
{\bar T} (q) =T_\ell + (T_r -T_\ell) q \, ,
\end{equation}
for the energy profile.

\subsubsection{Derivation of Macroscopic Equations}

 We sketch a proof of (\ref{eq:he})  at a somewhat informal level. It is based on the following two part analysis. A fully rigorous treatment would involve dealing with several non-trivial technical problems. We will not do this here but see \cite{Ber0,BO} for a rigorous proof in a similar context.

\begin{enumerate}
 \item {\it Local thermal equilibrium (LTE):\/}
Local thermal equilibrium corresponds to each given small macroscopic region of the system being in equilibrium, but different regions may be in different equilibrium states, corresponding to different values of the parameters (see (\ref{eq:locG})). Let us consider a very large number, say  $2L+1$, of sites in microscopic units ($L\gg 1$), but still an infinitesimal number at the macroscopic level ($(2L+1)/N \ll 1$). We choose hence $L=\delta N$ where $\delta \ll 1$. We consider the system in the box $\Lambda_L (x)$ composed of the sites labeled by $x-L, \ldots, x+L$. The time evolution of the $2L+1$ oscillators is essentially given by the bulk dynamics since the variations of deformation and energy in the volume containing the $2L+1$ oscillators changes only via boundary fluxes. After $N^2$ microscopic time units, the system composed of the $2L+1$ oscillators has relaxed to the micro-canonical state $\lambda_{{\bar e}_q (t
 N^2), {\bar r}_q (t N^2)}$ at $q=x/N$ corresponding to the local empirical deformation ${\bar r}_q (tN^2)$ and the local empirical energy ${\bar e}_q (t N^2)$ in the box $\Lambda_L (x)$. Indeed, we can divide the observables into two classes, according to their relaxation times: the fast observables, which relax to equilibrium values on a time scale much shorter than $N^2$ and will not have any effect on the hydrodynamical scales, and the slow observables which are locally conserved by the dynamics and need much longer times to relax. Let $\phi (\omega)$ be a local observable and let $\phi_x = \tau_x \phi$ denote the same observable translated by $x$. Thermal local equilibrium corresponds to $\langle\phi_{[Nq]} (N^2 t)\rangle$ being close to its value in $\lambda_{{\bar e}_q (tN^2), {\bar r}_q (tN^2)}$. In the thermodynamic limit we take first $N \to \infty$ and then $\delta \to 0$ (we recall that $\delta$ is related to the size of the box $\Lambda_L (x)$ by $L=\delta N$),
 we have, by equivalence of ensembles,
\begin{equation*}
\langle\phi_{[Nq]} (N^2 t)\rangle \approx {\hat \mu}_{{\varepsilon} (q,t), {u} (q,t)} (\phi)
\end{equation*}
since
$${\bar e}_q (tN^2) \approx {\varepsilon} (q,t), \; {\bar r}_q (tN^2) \approx {u} (q, t)\, .$$
\item {\it First order corrections to LTE:\/}  Since we observe the system on a diffusive scale, local thermal    equilibrium alone is not sufficient to determine the form of the hydrodynamic equations. The second key ingredient is the computation of first order corrections to LTE. These corrections are expressed (in the bulk) by what is called a ``fluctuation-dissipation equation" in the hydrodynamic limit mathematical literature.  In general, such equations are not explicit, but this happens for our system. We begin by noting that the currents are given, on the microscopic level, by
\begin{equation}
\label{eq:fd}
\begin{split}
&j_x^e = - (\nabla \phi)_x +{\mc L} (h_x), \quad j_x^r = -\gamma^{-1} (\nabla r)_{x-1} + {\mc L} (\gamma^{-1} p_x), \\
&\phi_x= (2 \gamma)^{-1} (p_{x-1}^2 +r_{x-1} r_{x-2}), \quad h_x = -\gamma^{-1} j_{x}^e \, ,
\end{split}
\end{equation}
where ${\mc L}$ is defined in (\ref{eq:generatorL}) and $(\nabla\phi)_x = \phi_{x+1}-\phi_x$.
\end{enumerate}

To get the hydrodynamical equations, we let $G : [0,1] \to \RR$ be a test function. Then we have
\begin{eqnarray*}
&& N^{-1} \sum_{x=1}^{N} G(x/N) \langle {\mc E}_{x} (tN^2) \rangle - N^{-1} \sum_{x=1}^{N} G(x/N) \langle {\mc E}_{x} (0) \rangle\\
&\approx& N^{-2} \sum_{x=1}^N \int_0^{tN^2} G' (x/N) \langle j_{x}^e (s) \rangle ds \\
&\approx& N^{-3} \sum_{x=1}^N \int_0^{tN^2} G'' (x/N) \langle \phi_{x} (s) \rangle ds \\
&& \quad + N^{-2} \sum_{x=1}^N \int_0^{tN^2} G' (x/N) \langle ({\mc L} h_x) (s) \rangle ds\\
&\approx& N^{-1} \sum_{x=1}^N \int_0^{t} G'' (x/N) \langle \phi_{x} (s N^2) \rangle ds  \\
&& \quad + N^{-2} \sum_{x=1}^N G' (x/N) \left\{ \langle h_x (tN^2) \rangle -  \langle h_x (0) \rangle\right\}\\
&\approx& N^{-1} \sum_{x=1}^N \int_0^{t} G'' (x/N) \langle \phi_{x} (s N^2) \rangle ds + {\mc O} (N^{-1})\, .
\end{eqnarray*}
To get the first equality, we have used local conservation of energy (\ref{eq:lc}) and a discrete integration by parts.
To get the second one, we used
(\ref{eq:fd}) and a second discrete integration by parts. To get the third equality,we used  Lemma \ref{lem:sto} in Appendix B.

By LTE the term $\langle \phi_{x} (s N^2) \rangle $ is asymptotically equivalent in the thermodynamic limit to
$${\hat \mu}_{{\varepsilon} (q,s), {r} (q,s)} (\phi_x)$$
where $q = x/N$. A simple computation shows that
$$\mu_{T,\tau} (\phi_x) = (2 \gamma)^{-1} \left( \tau^2 +T \right) = (2 \gamma)^{-1} \left[ E(T,\tau) + R^2 (T,\tau) /2 \right]$$
so that
$${\hat \mu}_{{\bar E}, {\bar R}} (\phi_x)=(2 \gamma)^{-1} \left( {\bar E} +{\bar R}^2 /2 \right) \, .$$
This closes the evolution equation for the energy field.

A similar computation can be performed for the deformation field giving the form of the hydrodynamic equations (\ref{eq:he}) in the bulk. These equations are, as already noted, to be solved subject to the boundary conditions (\ref{eq:eqhlboundaries}).
The reason for this is that the time relaxation of the Langevin baths is of order one and we look at the process at time $tN^2$. On this time scale the system has reached LTE with temperature $T_\ell$ at the left and temperature $T_r$ at the right. Since the average of the kinetic energy at (macroscopic) time $t$ and position $q \in [0,1]$ is given by $(\ve -u^2 /2) (q,t)$, we get the second equation in (\ref{eq:eqhlboundaries}). The Neumann boundary conditions for the $u$-field are a consequence of the conservation of the total length of the chain $\sum_{x=0}^N r_x (t) = q_{N+1} (t) -q_{0} (t)$ which remains equal to $0$ during the time evolution. Thus we have $\int_{[0,1]} u(q,t) dq =0$ for any time $t$. Since in the bulk $u$ is a solution of the heat equation, by taking the time derivative of the previous equality, we get $f(t):=\partial_q u (0,t)=\partial_q u (1,t)$ (the fluxes at the boundaries compensate). We have to show that $f(t)=0$. Observe that
\begin{equation}
\label{eq:eqsss}
\begin{split}
&{\mc L} p_1 = (r_{1} -r_{0}) - (\gamma +1) p_1, \quad {\mc L} r_0 =p_1.
\end{split}
\end{equation}
Thus, by Lemma \ref{lem:sto} in Appendix \ref{sec:asc}, we have for any $t>0$,
\begin{equation*}
\begin{split}
\cfrac{1}{N} \left\{ \langle p_{1} (tN^2)\rangle -\langle p_1 (0) \rangle \right\}&= \int_{0}^t \; N \left[ \langle r_1 (sN^2) \rangle -\langle r_0 (sN^2) \rangle \right]\,  ds\\
&-(\gamma +1)\, N \int_0^t \langle p_1(sN^2) \rangle ds,\\
\int_{0}^t \langle p_1 (sN^2) \rangle ds& = \cfrac{1}{N^2} \left\{ \langle r_0 (tN^2) \rangle - \langle r_0 (0) \rangle\right\}
\end{split}
\end{equation*}
The second equality shows that $N \int_0^t \langle p_1(sN^2) \rangle ds$ vanishes as $N$ goes to infinity. The left hand side of the first equality is of order $1/N$ and since $\int_{0}^t \; N \left[ r_1 (sN^2) -r_0 (sN^2) \right]\,  ds $ converges to $\int_0^t (\partial_q u)(0,s)ds = \int_0^t f(s) ds$, we have $f(s)=0$ for any $s$.

%Since for any function $f$, $\nessvf{{\mc L} f }=0$, equation (\ref{eq:eqsss}) implies $\nessvf{p_x}=0$ for any $x$ and thus $\nessvf{r_0}= \nessvf{r_1}=\ldots= \nessvf{r_{N}}$. This shows that $\lim_{t \to +\infty} u(q,t)$ is a flat profile, i.e. $u_\ell =u_r$, Since we have seen $u_\ell= -u_r$ it follows that $u_\ell=u_r=0$.
%
%\com{(Jani) Actually, does not this argument rather speak for Neumann, not Dirichlet, boundary conditions for $u$?  (Flat $r$-field means zero derivative.)  Getting the zero Dirichlet from the total conservation law does not sound quite right to me: it takes infinite diffusive time to reach the steady state, and $u$ is not everywhere zero on the diffusive time, right?  In fact, now that I think of it, I think the BC has to be Neumann, not Dirichlet: note that  ${\mc R}(t)=\int_0^1 \! u(q,t) dq$ is not conserved with Dirichlet BC, but is so for Neumann BC.  If you agree, could you fix the computations, mainly in Sect 4?  (The fluctuation Laplacian can still be Dirichlet, I guess.  However, it would be nice to have more details on the $u$ field in the appendix.)}

\subsection{The pinned chain}
\label{subsec:pc}
When $\nu>0$, the energy current $j_x^e$ satisfying ${\mc A} {\mc E}_x = j_{x}^e - j_{x+1}^e$  is given by
\begin{equation*}
j_{x}^e = -\cfrac{1}{2} (p_x +p_{x-1}) (q_{x} -q_{x-1} ) \, .
\end{equation*}
Because of the presence of the pinning, the bulk dynamics conserves only one quantity: the energy ${\mc H}$. It follows that the Gibbs equilibrium measures of the infinite system $\mu^{\rm{eq}}_{T}$ are fully characterized by the temperature $T=\beta^{-1}$. Under $\mu^{\rm{eq}}_T$ the $p_x$ are independent of the $q_x$ and are independent identically distributed Gaussian variables of variance $T$. The $q_x$ are distributed according to a centered Gaussian process with covariance $\mu_T (q_x q_y) = \Gamma (x-y)$ such that
\begin{equation}
\label{eq:Gamma}
[\nu^2 -\Delta] \Gamma \, (z) = T\, {\bf 1}_{\{0\}} (z) \, .
\end{equation}
In particular, we have $\mu_T (p_x^2) =\mu_T ({\mc E}_x) =T$.

The ``fluctuation-dissipation" equation (\ref{eq:fd}), which gives the first order corrections to local thermal equilibrium, is now
\begin{equation}
\label{eq:fd22}
j_x^e = -(\nabla \phi)_x + {\mc L} (h_x)
\end{equation}
with $h_x = -\gamma^{-1} j_x^e$, as before, but we need to choose now
\begin{align}
\label{eq:phi007}
&\phi_x = \cfrac{1}{2\gamma} \left[ p_{x-1}^2 - (\nu+1) q_{x-1}^2 -q_x q_{x-2} +q_{x-1}q_x +q_{x-1}q_{x-2}\right]\, .
\end{align}

Assume that the system is initially distributed according to a Gibbs local equilibrium measure associated to the energy profile ${\varepsilon}_0 (q), \; q \in [0,1]$, and define $\varepsilon (q,t)$ as the evolved profile in the diffusive scale, i.e.,
\begin{equation}
{\ve } (q,t) = \lim_{N \to \infty} \langle {\mc E}_{[Nq]} (t N^2) \rangle \, .
\end{equation}
Then the arguments of Subsection \ref{subsec:uc} and (\ref{eq:fd22}) show that $\varepsilon$ is the solution of the following heat equation
\begin{equation}
\begin{cases}
\partial_t \varepsilon = \partial_q (D (\varepsilon)  \partial_q {\varepsilon})\, ,\\
\varepsilon (\cdot, 0) = \varepsilon_0 (\cdot)\, ,\\
\varepsilon (0,t)= T_\ell, \; \varepsilon (1,t) =T_r \, .
\end{cases}
\end{equation}
The diffusivity (in general called the conductivity in the context of heat conduction) $D(T)$ is given by
\begin{equation}
D (T) = -\cfrac{\mu_T (\phi_x)}{T}
\end{equation}

By (\ref{eq:phi007}) and (\ref{eq:Gamma}) we get that
\begin{eqnarray*}
\mu_T (\phi_x) &=& \cfrac{1}{2\gamma} \left[ T -(\nu^2 +1) \Gamma (0) -\Gamma (2) +2 \Gamma (1) \right]\\
&=& \cfrac{1}{2\gamma} \left[ T - \nu^2 \Gamma (0) - (\Delta \Gamma )(1) \right]\\
&=& \cfrac{1}{2\gamma} \left[ T - \nu^2 (\Gamma (0) +\Gamma (1))\right]
\end{eqnarray*}
and $(\nu^2 +2) \Gamma (0) -2\Gamma (1) =T$, $\Gamma (0) =T \int_0^1 (\nu^2 + a \sin^{2} (\pi k))^{-1} dk$. Thus we have
\begin{equation}
\label{eq:kappa}
D:= D(T) =   \cfrac{1/\gamma}{2+\nu^2 + \sqrt{\nu^2 (\nu^2 +4)}}
\end{equation}
Observe that, as expected, the value of $D$ is equal to the value of the conductivity $\kappa (T)$ in the NESS (see (\ref{eq:K})).

When $t$ goes to infinity ${\varepsilon} (q,t)$ converges to the linear profile ${{\bar T}} (q)= T_\ell + (T_r- T_\ell)q$. The latter is the typical profile observed in the stationary state $\langle \cdot \rangle_{{\rm{s,v}}}$.
We note finally that, since the self-consistent model does not conserve energy in the bulk, we do not expect any autonomous macroscopic equations in that model.

\section{Energy Fluctuations in the stationary state of the velocity flip model}
\label{sec:cor}

\subsection{The unpinned chain}
\label{sec:unpin}
In order to obtain the form of the energy fluctuations in the steady state we first derive the form of the energy {\textit{dynamical fluctuations}} in the diffusive scale. We start again with the unpinned case.

\subsubsection{Dynamical fluctuations for the unpinned case}
\label{sec:dynfluc}

The deformation and energy fluctuation fields are defined by
\begin{equation*}
\begin{split}
{\mc R}_t^N (F) = \cfrac{1}{\sqrt N} \sum_{x=1}^N F(x/N) \left[ r_x (tN^2) - u(x/N, t) \right],\\
{\mc Y}_{t}^N (G) = \cfrac{1}{\sqrt N} \sum_{x=1}^N G(x/N) \left[ {\mc E}_x (tN^2) - {\varepsilon}(x/N, t) \right],
\end{split}
\end{equation*}
where $F,G : [0,1] \to \RR$ are smooth test functions vanishing at the boundaries and $u,\varepsilon$ are the solutions of the hydrodynamic equations (\ref{eq:he}). For notational convenience we also introduce the temperature profile $T(q,t)$ associated to $\varepsilon, u$ and defined by
\begin{equation}
T(q, t)= \varepsilon (q,t) - \cfrac{u^2 (q,t)}{2}
\end{equation}

 We assume that the process is distributed according to a Gibbs local equilibrium state with deformation profile $u_0$ and energy profile $\varepsilon_0$. One shows easily that as $N$ goes to infinity $({\mc R}_0^N, {\mc Y}_0^N)$ converges to some limit field $({\mc R}_0, {\mc Y}_0)$ but since we do not need the explicit form of the latter in the sequel we do not compute it here (see \cite{KL}).

Let $a,b,c$ be the space-time dependent functions given by
\begin{equation}
\label{eq:abc}
a= u \sqrt {\cfrac{2T}{\gamma}},\quad b= \cfrac{T}{\sqrt \gamma},\quad c= \sqrt {\cfrac{2T}{\gamma}}.
\end{equation}
We claim that as $N$ goes to infinity $({\mc R}^N, {\mc Y}^N)$ converges to $({\mc R}, {\mc Y})$ given by the solution of the following coupled equations
\begin{equation}
\label{eq:tf}
\begin{cases}
\partial_t {\mc R} = \gamma^{-1} \, \partial_q^2\,  {\mc R} - \, \partial_q \,  \left( c W_{1}\right)\\
\partial_t {\mc Y} = (2 \gamma)^{-1} \left( \, \partial_q^2 \,  (u {\mc R}) +\, \partial_q^2 \,  {\mc Y} \right) -\, \partial_q \, \left( a W_1 + b W_2 \right)\, ,
\end{cases}
\end{equation}
where $W_1, W_2$ are two independent space-time white noises. The initial conditions, independent of $W_1,W_2$, are given by ${\mc R}_0$ and ${\mc Y}_0$.

The proof of (\ref{eq:tf}) is the following. For any  Gibbs equilibrium measure  ${\hat \mu}_{{\bar E},{\bar R}}$ with mean energy ${\bar E}$ and mean deformation ${\bar R}$, and for any local observable $f$ depending on the configuration $\omega$,  let ${\hat f} ({\bar E}, {\bar R})={\hat \mu}_{\bar E,\bar R}(f)$. Using the fluctuation dissipation equations (\ref{eq:fd}), standard stochastic calculus and the hydrodynamic equations (\ref{eq:he}), it is not difficult to show that
\begin{equation}
\label{eq:324}
\begin{cases}
{\mc R}_t^N (F) -{\mc R}_0^N (F) \approx \gamma^{-1} \int_0^t {\mc R}_s^N (F'') ds + {\mc M}_{1}^N (t), \\
{\mc Y}_t^N (G) -{\mc Y}_0^N (G) \approx  \int_0^t  \cfrac{1}{\sqrt N} \sum_x G'' (x/N) \left[ \phi_x (sN^2)  -{\hat \phi}_x (\ve (\frac{x}{N}, s), u(\frac{x}{N},s)) \right] ds\\
\phantom{{\mc Y}_t^N (G) -{\mc Y}_0^N (G) \approx } + {\mc M}_{2}^N (t)\
\end{cases}
\end{equation}
where
\begin{equation}
\label{eq:hatphi}
{\hat \phi_x} ({\bar E},{\bar R})= (2\gamma)^{-1} \left ( {\bar E} + {\bar R}^2 /2 \right)
\end{equation}
and ${\mc M}_1^N, {\mc M}_2^N $ are two martingales. By using (\ref{eq:416}) of Appendix \ref{sec:asc} and the LTE assumption, we get that, as $N$ goes to infinity,
\begin{equation*}
\begin{split}
& \langle  \left[ {\mc M}_1^N (t) \right]^2  \rangle  \to \cfrac{2}{\gamma} \int_0^t ds \int_0^1 dq [H' (q)]^2 T(q,s), \\
& \langle  \left[ {\mc M}_2^N (t) \right]^2  \rangle \to \cfrac{1}{\gamma} \int_0^t ds \int_0^1 dq [G' (q)]^2 \, T(q,s) \left( T(q,s) +2 u^2 (q,s)\right), \\
& \langle {\mc M}_1^N (t) {\mc M}_2^N (t)  \rangle \to \cfrac{2}{\gamma} \int_0^t ds \int_0^1 dq [H' (q) G' (q)] \, T(q,s) u(q,s).
\end{split}
\end{equation*}
This implies that the martingales $({\mc M}_1^N, {\mc M}_2^N)$ converge as $N$ goes to infinity to $(c W_1 (H') , (a W_1 +bW_2) (G'))$ with $W_1, W_2$ two standard independent space-time white noise and $a,b,c$ are the functions introduced in (\ref{eq:abc}).

The first equation of (\ref{eq:324}) is closed and the martingale convergence result gives the first equation of (\ref{eq:tf}).
The second one is not and the closure is obtained through a {\textit{nonequilibrium Boltzmann-Gibbs principle}} \cite{LOM} which generalizes the standard {\textit{equilibrium Boltzmann-Gibbs principle}} \cite{BR}. Let us explain what it means. Observables are divided into two classes: non-hydrodynamical and hydrodynamical. The first ones are the
non-conserved quantities and they fluctuate on a much faster scale than the conserved ones. Hence, they should average out and only their projection on the hydrodynamical variables should persist in the scaling limit. Thus there exist macroscopic space-time dependent functions $C(q,s), D(q,s)$ such that for every test function $J(q)$, $q \in [0,1]$, and any local observable $f$ depending on the configuration $\omega$,
\begin{equation*}
\begin{split}
&\lim_{N \to \infty} \cfrac{1}{\sqrt N} \int_0^t ds \sum_{x=1}^{N} J (x/N) \left\{ (\tau_x f )( sN^2  ) -{\hat f}  (\ve ({x}/{N}, s), u({x}/{N},s))  \right. \\
&\quad \quad \quad \quad \quad \quad  \left. -C(x/N, s) \, (r_x -u(x/N,s )) -D (x/N, s)\, \left({\mc E}_x-\varepsilon (x/N,s )\right)\right\}\\
&=0
\end{split}
\end{equation*}
in mean square norm. The functions $C$ and $D$ depend on the macroscopic point $q=x/N$ and on the macroscopic time $s$. In order to compute them, we assume local thermal equilibrium. Around the macroscopic point $q$, the system is considered at equilibrium with a fixed value of the deformation $u(s,q)$ and of the temperature $T(s,q)$. The functions $C,D$ are then computed by projecting the function $\tau_x f$ on the deformation and energy fields. The values of $C$ and $D$ are given by
\begin{equation*}
C(q,s)=( \partial_{\bar R} {\hat f})\,  (u(q,s), {\varepsilon} (q,s)), \quad D (q,s)= (\partial_{\bar E} {\hat f})  \,(u(q,s),{\varepsilon} (q,s)).
\end{equation*}
Using the Boltzmann-Gibbs principle we see that we can close the time evolution equation for the energy fluctuations field. By (\ref{eq:hatphi}), we have
$$C(q,s)= \cfrac{u(q,s)}{2\gamma}, \quad D (q,s)= \cfrac{1}{2\gamma}$$
and we obtain the second equation in (\ref{eq:tf}).

\subsubsection{Stationary fluctuations}

We are now ready to compute the energy and deformation fluctuations in the steady state, i.e., in the large $N$ limit of
\begin{align*}
%\begin{split}
& {\mc R}_{ss}^N (F) = \cfrac{1}{\sqrt N} \sum_{x=1}^N F (x/N)  r_x,\\
& {\mc Y}_{ss}^N (G) = \cfrac{1}{\sqrt N} \sum_{x=1}^N G(x/N) \left[ {\mc E}_x  - {\bar T} (x/N) \right],
%\end{split}
\end{align*}
where the configuration $(\br, \bp)$ is distributed according to the steady state $\langle \cdot \rangle_{{\rm{s,v}}}$ and $\bar T$ is the temperature profile (\ref{eq:bT}).

We show in Appendix \ref{sec:ap} that {$({\mc Y}_{ss}, {\mc R}_{ss})= \lim_{N \to \infty} \, ({\mc Y}_{ss}^N, {\mc R}_{ss}^N)$ are two independent Gaussian fields with covariance given by
\begin{equation}
\label{eq:sf}
\begin{cases}
\nessvf{ \left[ {\mc R}_{ss} (F) \right]^2} =  \int_{0}^1  {\bar T} (q) F^{2} (q) dq \\
\\
\nessvf{ \left [ {\mc Y}_{ss} (G) \right ]^2} = \int_{0}^1 {\bar T}^2 (q) G^{2} (q) dq +(T_r -T_\ell)^2 \int_{0}^1 G (q) ((-\Delta_0)^{-1} G) (q) dq\\
\end{cases}
\end{equation}
where $\Delta_0 $ denotes the Laplacian with Dirichlet boundary conditions on $[0,1]$.

\subsection{The pinned chain}\label{sec:pinnedLDF}

Our goal is to estimate the probability that in the stationary state the empirical energy profile defined by
\begin{equation}
\theta^N (q) = \sum_{x=1}^{N} {\mc E}_x {\bf 1}_{[x/N, (x+1)/N)} (q), \quad q \in [0,1]
\end{equation}
is close to a prescribed macroscopic energy profile $e (q)$ different from ${{\bar T}} (q)$, i.e. we want to find the large deviation function (LDF) for the NESS.

At equilibrium, $T_\ell = T_r=T=\beta^{-1}$, the stationary state $\langle \cdot \rangle_{ss}$ coincides with the usual Gibbs equilibrium measure $\mu^{N, {\rm eq}}_T$ defined by
\begin{equation*}
d \mu^{N, {\rm eq}}_T = \cfrac{1}{Z_N (T)} \exp\left( - \beta \sum_{x=1}^{N} {\mc E}_x\right) \, \prod_{x=1}^N dq_x dp_x
\end{equation*}
where $Z_N (T)$ is the partition function. By the usual large deviations theory (see e.g. \cite{DZ}) we have that for any given macroscopic energy profile $\ve (\cdot)$
\begin{equation}
\label{eq:ld1}
{\mu}^{N,{\rm eq}}_T \left(  \theta^{N} (\cdot) \sim \ve (\cdot) \right)  \sim e^{-N V_{\rm{eq}} (e)}
\end{equation}
where
\begin{equation}
V_{\rm{eq}} (e) =\int_0^1 \left\{ \cfrac{e(q)}{T} -1- \log\left(\cfrac{e(q)}{T}\right) \right\} dq \, .
\end{equation}
The quantity $V_{\rm{eq}}$ is the large deviation function (LDF) in the canonical ensemble. The LDF coincides for equilibrium systems with the difference between the free energy of the system in LTE  and the equilibrium free energy.  It will thus vanish when $e(q)=T$ .

%\begin{proof}
%The functional $V_0$ can be obtained as the Legendre transform of the pressure $G_0$
%\begin{equation}
%\label{eq:vf11}
%V_0 (e) =\sup_h \left[ \int_0^1 e(q) h(q) dq - G_0 (h) \right]
%\end{equation}
%where $G_0$ is defined as
%\begin{equation*}
%G_0 (h):= \lim_{N \to \infty} \cfrac{1}{N} \log \mu_T \left[ \exp \left(\sum_{x=1}^N h(x/N) \, {\mc E}_x \right) \right]
%\end{equation*}
%For a given function $\alpha:[0,1] \to [0,+\infty)$ we define $Z_N (\alpha(\cdot))$ by
%\begin{equation*}
%Z_N (\alpha(\cdot))= \int  \exp\left( - \sum_{x=1}^{N} \alpha (x/N) {\mc E}_x\right) \, \prod_{x=1}^N dq_x dp_x
%\end{equation*}
%so that
%$$G_0 (h) = \lim_{N \to \infty} \left[  \cfrac{1}{N} \log Z_N (\beta-h(\cdot)) - \cfrac{1}{N} \log Z_N (\beta) \right]$$

%It is easy to show that
%\begin{equation*}
%\lim_{N \to \infty} \cfrac{1}{N} \log Z_N (\alpha(\cdot)) = \log (2\pi) -\cfrac{1}{2} \log \nu -\int_0^1 \log (\alpha (q)) \, dq
%\end{equation*}

%It follows that
%$$G_0 (h) = -\int_0^1 \log \left( 1- {T h(q)} \right) dq$$
%and we get the explicit formula for $V_0$ by solving the variational problem (\ref{eq:vf11}).
%\end{proof}

%In particular
%$$\cfrac{\delta V_0 (e(u))}{\delta e} = \cfrac{1}{T} -\cfrac{1}{e(u)} = R_T (e(u))$$
%We note
%$$Z_T (w) = \cfrac{1}{1-wT}$$

The purpose of the macroscopic fluctuation theory (MFT) of Bertini et al.\ \cite{Bjsp} is to obtain the probability of a large deviation for boundary driven diffusive systems. It predicts that out of equilibrium ($T_\ell \ne T_r$) a large deviation principle in the following form holds
\begin{equation}
\mu_{\rm{s,v}}^N \left( \theta^{N} (\cdot) \sim e (\cdot) \right)  \sim e^{-N V(e)} \, .
\end{equation}
Moreover, the rate function $V$ can be identified with the so-called quasi-potential, a quantity introduced in the context of stochastically perturbed dynamical systems \cite{FW}, and generalized by Bertini et al.\ to the infinite dimensional context; $V$ is then the LDF for the energy profile.

The explicit form of the quasi-potential is in general unknown. But the MFT has been rigorously proven in the context of the boundary driven simple exclusion process \cite{B51,BG,F} and formally derived for few other systems \cite{Ber,Breview,BGL}.
The main interest of the MFT is that it does not rely on the microscopic properties of the system studied but only on two macroscopic quantities: the diffusivity and the mobility \cite{Breview}. In particular, the LDF $V$ depends only on these two data.

For the system we are interested in, the diffusivity $D(T)$ is given by (\ref{eq:kappa}). By the Einstein relation the mobility $\chi (T)$ is equal to $\chi (T)= D(T) \sigma (T)$ where $\sigma (T)$ is the static compressibility defined by
\begin{equation}
\sigma (T) := \sum_{x \in \ZZ} \mu_T \left( ({\mc E}_0 -T) ({\mc E}_x -T) \right) \, .
\end{equation}
A simple computation shows that $\sigma (T) =T^2$ and Theorem 6.5 of \cite{Breview} applies.
It follows that $V(\cdot)$ is given by
\begin{equation}
V(e)=\int_0^1 dq \left[ \cfrac{e(q)}{F(q)} -1 -\log\left( \cfrac{e(q)}{F(q)}\right) - \log \left( \cfrac{F' (q)}{T_r -T_\ell }\right) \right]\, ,
\end{equation}
where $F$ is the unique increasing solution of
\begin{equation}
\begin{cases}
\cfrac{\partial^2_q F}{(\partial_q F)^2 } = \cfrac{F-e}{F^2}\, ,\\
F(0)=T_\ell , \; F(1) =T_r\, .
\end{cases}
\end{equation}
Thus the function $V$ is independent of the pinning value $\nu^2$ and of the intensity of the noise $\gamma$. In fact, it coincides with the LDF of the KMP model considered in \cite{BGL}.

It is now easy to derive the Gaussian fluctuations of the empirical energy. We consider a small perturbation, $e={{\bar T}} + \delta h$, of the stationary profile ${\bar e}$. The functional $V$ has a minimum at ${{\bar T}}$ so that
\begin{equation*}
V(e) = V({{\bar T}}) + \cfrac{1}{2} \, \delta^2 \, \langle h, C^{-1} h \rangle +o (\delta^2)
\end{equation*}
The operator $C$ is the covariance for the Gaussian fluctuations of the empirical energy under the invariant measure $\mu_{\rm{s,v}}^N$. The computations are exactly the same as in \cite{BGL}, section 5.2, and we get
\begin{equation}
C= {{\bar T}}^2 {\bf 1} + (T_r -T_\ell)^2 (-\Delta_0)^{-1}
\end{equation}
where $\Delta_0$ denotes the Laplacian with Dirichlet boundary conditions on $[0,1]$.  Therefore, it coincides with the expression (\ref{eq:sf}) even though (\ref{eq:sf}) is derived assuming $\nu=0$.

\subsection{Higher dimensions}

It should be possible to extend the previous results to higher dimensions, with or without pinning, apart from the derivation of the LDF performed in subsection \ref{sec:pinnedLDF}.  A possible difficulty for $\nu =0$ is that the Gibbs equilibrium measure has long range correlations in the $q$ variables and they are even unbounded in the volume for $d=1,2$.  At equilibrium, these problems can be avoided by considering the corresponding gradient fields \cite{gos01}.  For instance, in $d=1$ these are given by the variables $r_x = q_{x+1}-q_x$, which have a product equilibrium measure.

%The derivation of the LDF performed in subsection \ref{subsec:pc} for the pinned chain in higher dimensions or without pinning, even in one dimension, is an open problem.

\newcommand{\Tsc}{T^{\rm sc}}

\section{Energy correlations in the steady state for the self-consistent chain}
\label{sec:corsc}

In this section we consider the self-consistent chain and mainly follow the notations and results in \cite{BLL04}.
%In particular, in this section we denote $\nesssc{\cdot}$ by $\mean{\cdot}$.
We show that the energy correlations in the steady state are not long range in the sense that
\begin{equation}
\label{eq:sclr}
% \frac{1}{N} \left[ \mean{{\mc H}^2}-\mean{{\mc H}}^2 \right]
% =  \frac{1}{N} \sum_{j',j=1}^N \left[\eqmean{{\mc E}_{j'}     {\mc E}_j}{T_j}-\eqmean{{\mc E}_{j'}}{T_j}
% \eqmean{{\mc E}_j}{T_j}\right]
%  + R_N\, ,
\frac{1}{N} \nesssc{{\mc H};{\mc H}} =  \frac{1}{N} \sum_{j',j=0}^{N+1}
\eqmean{{\mc E}_{j'} ;    {\mc E}_j}{\Tsc_j}
 + R_N \, ,
\end{equation}
where $\{\Tsc_j\}$ denotes the self-consistent profile and $R_N$ is a remainder of order $1/\sqrt{N}$ in the pinned case and $N^{-1/4} \ln^2 N$ in the unpinned case. The fluctuations are thus dominated by the local equilibrium term, whose profile is well approximated by a linear profile connecting the fixed boundary temperatures

\subsection{The pinned self-consistent chain} \label{sec:pinnedscc}

We begin with the pinned case $\nu>0$, since the necessary estimates in this case have been proven in \cite{BLL04}.
The local energy at site $j \in \{0, \ldots,N+1\}$ can be written as
%
%\begin{align}
%  e_j(q,p) := \frac{1}{2} p_j^2 + \frac{1}{2} \omega^2\nu^2 q_j^2 +
%  \frac{1}{4} \omega^2 \left[(q_j-q_{j-1})^2+
%  (q_{j+1}-q_{j})^2\right]\, ,
%\end{align}
%and the total Hamiltonian energy is given by $H(q,p) = \sum_{j=1}^N e_j(q,p)$.
%Clearly,
\begin{align}
 & {\mc E}_j = \frac{1}{2} p_j^2 + \frac{1}{2} \bq^T\! A^{(j)} \bq\, , \quad
  \text{with } \\ & \quad A^{(j)}_{xy} = \nu^2 \cf[j=x=y] + \cf[x=y] \sum_{y'=0}^{N+1} B^{(j)}_{xy'} - B^{(j)}_{xy}\, ,
 \nonumber \\ & \quad
 B^{(j)}_{xy} = \cf[|x-y|=1] \frac{1}{2} (\cf[x=j]+\cf[y=j])\, ,\nonumber
\end{align}
where $\cf[P]=1$, if the condition $P$ is true, and $\cf[P]=0$ otherwise.
For any choice of temperatures of the heat baths $\bT = \{ T_j\, ; \, j=1, \ldots,N\}$ the NESS $\muss ^{N, \bT}$, denoted to simplify notations by $\langle \cdot \rangle$, is Gaussian with mean zero. Hence we have
\begin{align}\label{eq:ejejex}
 & 4 \mean{{\mc E}_{j'} {\mc E}_j}
  \\ & \quad = \mean{p_{j'}^2 p_{j}^2} + \mean{\bq^T\!
   A^{(j)} \bq\, \bq^T\! A^{(j')} \bq}  +  \mean{p_{j'}^2 \bq^T\! A^{(j)} \bq} +\mean{p_{j}^2 \bq^T\!
   A^{(j')} \bq}
 \nonumber \\ & \quad
 = 4 \mean{{\mc E}_{j'}} \mean{{\mc E}_j} +  2\mean{p_{j'} p_{j}}^2 +
  2\sum_{x'y'xy} A^{(j')}_{x'y'}A^{(j)}_{xy}\mean{q_{x'}q_x}
  \mean{q_{y'}q_y}
    \nonumber \\ & \qquad
 + 2 \sum_{xy} A^{(j')}_{xy} \mean{p_j q_x} \mean{p_j q_y}
 + 2 \sum_{xy} A^{(j)}_{xy} \mean{p_{j'} q_x} \mean{p_{j'} q_y}
 \, ,  \nonumber
\end{align}
where the sums go only over $\{1,2,\ldots,N\}$, as $0=q_0=q_{N+1}=p_0=p_{N+1}$.
In particular, at thermal equilibrium with all the temperatures $T_j$ equal to $T$, we have
\begin{align}
%  & 2(\eqmean{{\mc E}_{j'} {\mc E}_j}{T}-\eqmean{{\mc E}_{j'}}{T} \eqmean{{\mc E}_j}{T})    \nonumber
%  \\ & \quad
 & 2 \eqmean{{\mc E}_{j'} ; {\mc E}_j}{T}
 = T^2 \Bigl( \cf[j'=j\in\{1,2,\ldots,N\}]   \\ & \qquad \nonumber
  + \mathrm{Tr}\! \left[A^{(j')} (-\Delta+\nu^2)^{-1}
   A^{(j)}(-\Delta+\nu^2)^{-1}\right]\Bigr) \, .
\end{align}
Since $\nu>0$, $(-\Delta+\nu^2)^{-1}_{xy}$ is exponentially
decreasing in $|x-y|$ with an exponent independent of $N$.
Therefore, using the fact that $A^{(j)}_{xy}$ is zero unless
$|x-j|,|y-j|\le 1$,
\begin{align}
  \sum_{j'=0}^{N+1} \left|\mathrm{Tr}\! \left[A^{(j')} (-\Delta+\nu^2)^{-1}
      A^{(j)}(-\Delta+\nu^2)^{-1}\right]\right|
\end{align}
is  uniformly bounded in $N$ and $j$.

Let $\bT=\{ \Tsc_j \, ; \, j=1, \ldots,N\}$ denote the self-consistent temperature profile, which has been
proven to lie between the boundary temperatures $T_\ell$ and $T_r$, and thus is
uniformly bounded in $N$.  For convenience, define also $T_0=T_\ell$ and $T_{N+1}=T_r$.  Then,
\begin{align}\label{eq:EEeqlb}
\sum_{j',j=0}^{N+1} \left|\eqmean{{\mc E}_{j'}; {\mc E}_j}{T_j} \right| = O(N)\, .
\end{align}
Thus the fluctuations of the energy at the NESS satisfy
\begin{align}
%  & \frac{1}{N} ( \mean{{\mc H}^2}-\mean{{\mc H}}^2) = \frac{1}{N} \sum_{j',j=1}^N
%  \left[\mean{{\mc E}_{j'} {\mc E}_j}-\mean{{\mc E}_{j'}}
%  \mean{{\mc E}_j}\right] \nonumber \\ & \quad =
%  \frac{1}{N} \sum_{j',j=1}^N \left[\eqmean{{\mc E}_{j'}
%      {\mc E}_j}{T_j}-\eqmean{{\mc E}_{j'}}{T_j} \eqmean{{\mc E}_j}{T_j}\right]
%  + R_N\, ,
 & \frac{1}{N} \mean{{\mc H};{\mc H}} = \frac{1}{N} \sum_{j',j=0}^{N+1}
 \mean{{\mc E}_{j'}; {\mc E}_j}
% \nonumber \\ & \quad
 = \frac{1}{N} \sum_{j',j=0}^{N+1} \eqmean{{\mc E}_{j'};{\mc E}_j}{T_j}
 + R_N\, ,
\end{align}
where the first term is $O(1)$ and,
by (\ref{eq:ejejex}), the remainder is bounded by
\begin{align}
 & |R_N| \le \frac{1}{2 N} \sum_{j',j=0}^{N+1} \Bigl|
  \delta_j(\mean{p_{j'} p_{j}}^2) +
  \sum_{x'y'xy} A^{(j')}_{x'y'}A^{(j)}_{xy}\delta_j(\mean{q_{x'}q_x}
  \mean{q_{y'}q_y})
    \\ & \qquad \nonumber
 + \sum_{xy} A^{(j')}_{xy} \delta_j(\mean{p_j q_x} \mean{p_j q_y})
 +  \sum_{xy} A^{(j)}_{xy} \delta_j(\mean{p_{j'} q_x} \mean{p_{j'}
   q_y} )\Bigr| \, .
\end{align}
Here we have used the shorthand notation $\delta_j$ to denote the difference
between a quantity evaluated with NESS and the equilibrium distribution at
temperature $T_j$, i.e.,
\begin{align}
  \delta_j \mean{X} & = \mean{X} - \eqmean{X}{T_j},\quad\text{and}\\
  \delta_j(\mean{X} \mean{Y}) &= \mean{X}
  \mean{Y}-\eqmean{X}{T_j}\eqmean{Y}{T_j}\, .
\end{align}
Since
\begin{align}
 \delta_j(\mean{X} \mean{Y})
  = \mean{Y} \delta_j \mean{X}+ \eqmean{X}{T_j}\delta_j \mean{Y}\, ,
\end{align}
we have
 \begin{align} \label{eq:RNbound}
 & |R_N| \le \frac{1}{2 N}
 \sum_{j=1}^N T_j \delta_j(\mean{p_{j}^2})
  \\ & \quad +
 \frac{1}{2 N} \sum_{j',j=0}^{N+1}
 \Bigl| \sum_{x'y'xy}  A^{(j')}_{x'y'}A^{(j)}_{xy}\eqmean{q_{x'}q_x}{T_j}
  \delta_j(\mean{q_{y'}q_y})\Bigr|
    \nonumber \\ & \quad
 +
 \frac{1}{2 N} \sum_{j',j=0}^{N+1} \Bigl|\mean{p_{j'} p_{j}}
  \delta_j(\mean{p_{j'} p_{j}}) +
  \sum_{x'y'xy} A^{(j')}_{x'y'}A^{(j)}_{xy}\mean{q_{x'}q_x}
  \delta_j(\mean{q_{y'}q_y})
    \nonumber \\ & \qquad
 +\sum_{xy} A^{(j')}_{xy} \mean{p_j q_y}\delta_j(\mean{p_j q_x} )
 + \sum_{xy} A^{(j)}_{xy} \mean{p_{j'} q_y} \delta_j(\mean{p_{j'}
   q_x} )\Bigr| \, .  \nonumber
\end{align}

It is proven in \cite{BLL04} that $T_j$ is linear up to corrections which are
uniformly bounded by $N^{-1/2}$,
and that $|\delta_j(\mean{X Y})|\le C N^{-1/2}$ for any choice of
$X,Y\in \{q_j,p_j\}_j$ where the constant $C$ depends only on the fixed boundary
temperatures and on $\nu,\gamma$. In particular, it was proven that
$\vep_N:= \max_{1\le k<N} | T_{k+1}-T_k|=O(N^{-1/2})$.
In fact, as we shall show next, the previous estimates can be
straightforwardly strengthened to an upper bound with exponential decay in the
difference $|j'-j|$.

We use the results proven in Sec.~2--4 of \cite{BLL04}.  There it is shown
that to each of the pair $(X,Y)$ in $(q,q)$,
$(q,p)$, $(p,q)$, $(p,p)$ corresponds a $B^{(k)}_{xy}$ which is independent of
the temperatures of the heat baths and using which
$\mean{X_{x} Y_{y}}=\sum_{k=1}^N B^{(k)}_{xy} T_k$ in the NESS of any
temperature profile $\{T_k>0\}$.
In addition, it was shown that
\begin{align}\label{eq:bkxy}
   B^{(k)}_{xy} & = \hat{f}_N(x-k,y-k) + \hat{f}_N(x+k,y+k)
      \\ & \quad  \nonumber
  - \hat{f}_N(x-k,y+k) - \hat{f}_N(x+k,y-k)\, ,
\end{align}
where the function $\hat{f}_N$ is exponentially decaying in both arguments in
the precise sense that there are $N$-independent constants $C$, $\alpha>0$
such that
\begin{align}\label{eq:fnexpbound}
  |\hat{f}_N(x,y)| \le C \rme^{-\alpha (|x|'+|y|')}\, ,
\end{align}
with the periodic distance $|x|':= |x \bmod 2(N{+}1)|$ where $x \bmod 2(N{+}1)
\in \{-N,-N+1,\ldots,N+1\}$.
For $x,r\in \{1,2,\ldots,N\}$, one obviously has $|x-r|\le N-1$ implying
$|x-r|' = |x-r|$. But $|x+r|'=x+r$ only for $x+r\le N+1$, and else
$|x+r|'=2 N +2 -x-r$.  For $x+r> N+1$ thus
$|x+r|'= 2(N+1-x)+x-r = 2 (N+1-r)+r-x$ which implies that $|x+r|'\ge |x-r|$,
and this inequality is obviously true for $x+r\le N+1$.  Thus we can conclude
that for any choice of the signs
\begin{align}\label{eq:persum}
  |x\pm r|'+ |y \pm r|'
\ge |x-r|+|y-r| \ge \frac{1}{2} (|x-y| + |x-r| +|y-r|)\, ,
\end{align}
where the second inequality follows from the triangle inequality.  Therefore,
we have now proven that
\begin{align}\label{eq:Brxybound}
   |B^{(r)}_{xy}| \le 4 C \rme^{-\frac{1}{2}\alpha (|x-y| + |x-r| +|y-r|)}\, .
\end{align}

Using these estimates, we can thus improve Corollary 4.2 of \cite{BLL04} to
conclude that for any temperature profile
and for any $x,y,j$ in $\{1,2,\ldots,N\}$,
\begin{align}
& \left|\mean{X_{x} Y_{y}}-\eqmean{X_{x} Y_{y}}{T_j}\right| \le \sum_{k=1}^N
|B^{(k)}_{xy}| |T_k-T_j|
    \\ & \quad
\le 4 C \sum_{k=1}^N \rme^{-\frac{1}{2}\alpha (|x-y| + |x-k| +|y-k|)} |k-j| \vep_N
    \nonumber \\ & \quad \nonumber
    \le 4 C  \vep_N\rme^{-\frac{1}{2}\alpha |x-y|}
     \sum_{n\in \Z} \rme^{-\frac{1}{2}\alpha |n|}(|n|+|x-j|) \, ,
\end{align}
where $\vep_N:= \max_{1\le k<N} |T_{k+1}-T_k|$.  Clearly, the final bound is then valid also for $j=0,N+1$ simply since $T_0=T_1$ and $T_{N+1}=T_N$.
Thus for the self-consistent profile there is $C'>0$ independent of $N$ such
that
\begin{align}\label{eq:ltediff}
& \left|\delta_j(\mean{X_{x} Y_{y}})\right|
  %  \nonumber \\ & \quad
\le \frac{C'}{\sqrt{N}} (1+|x-j|) \rme^{-\frac{1}{2}\alpha |x-y|}\,.
\end{align}
In addition, (\ref{eq:Brxybound}) implies that there is $C''>0$ such that for
any $T>0$,
\begin{align}
& \left|\eqmean{X_{x} Y_{y}}{T}\right| \le T \sum_{k=1}^N  |B^{(k)}_{xy}| \le
C'' T\rme^{-\frac{1}{2}\alpha |x-y|}\, ,
\end{align}
and for the selfconsistent NESS
\begin{align}
& \left|\mean{X_{x} Y_{y}}\right| \le C'' \max(T_\ell,T_r)
\rme^{-\frac{1}{2}\alpha |x-y|}\, .
\end{align}

Applying these bounds in (\ref{eq:RNbound}) shows that there is $C$
independent of $N$ such that
\begin{align}
&  |R_N|\le \frac{C}{\sqrt{N}} \Bigl[ 1 + \frac{1}{N} \sum_{|x-y|\le 1,
  |x'-y'|\le 1}
  \rme^{-\frac{1}{2}\alpha (|x'-x|+|y'-y|)}
  \\ & \quad \nonumber
  + \frac{1}{N} \sum_{j}  \sum_{|x-y|\le 1}
  \rme^{-\frac{1}{2}\alpha (|j-x|+|j-y|)}
   + \frac{1}{N} \sum_{j'}  \sum_{|x-y|\le 1}
  \rme^{-\frac{1}{2}\alpha (|j'-x|+|j'-y|)}
 \Bigr]\, .
\end{align}
Clearly, then $R_N=O(N^{-1/2})$, and thus it becomes negligible compared to
the local equilibrium part when $N\to\infty$.

% \begin{remark}
% The proof above of (\ref{eq:sclr}) is not restricted to the one dimensional case and can easily be adapted in the multidimensional case.  We think that it is also valid for the unpinned chain. Nevertheless, the proof would require a complete reworking of \cite{BLL04}, where only the pinned chain has been investigated.
% \end{remark}

%\newpage

\subsection{The unpinned self-consistent chain} \label{sec:scmunpinned}

The unpinned self-consistent chain with $\nu=0$ was not considered in \cite{BLL04}, mainly because the decay of correlations is no longer exponential which complicates the analysis.  It is not completely straightforward to see that the decay is sufficiently strong for the previous argument to work, and one has to consider the right subset of observables to find strong enough decay; even at equilibrium $\mean{q_j q_{j'}}= T ((-\Delta)^{-1})_{j,j'}$ which implies that
$\sum_{j,j'}\mean{q_j q_{j'}}= O(N^3)$.  However the energy only depends on the $r_x$ variables, and these turn out to have better decay properties. We show in this section that (\ref{eq:sclr}) holds also in the self consistent profile for $\nu=0$: its first term is $O(1)$ and the remainder $R_N$ has a bound
$O(N^{-1/4} \ln^2 N)$ and is thus dominated by the first, local equilibrium, term.

Even with $\nu=0$, the formulae for the steady state covariance given in Section 2 of \cite{BLL04} hold, and each component is a continuous function of $\nu\ge 0$.  Thus we can make a shortcut by considering the $\nu\to 0$ limits of the earlier derived expressions for the steady state expectations.  Explicitly, we get from Sections 2 and 4 of \cite{BLL04} that for any $i,j\in\{1,2,\ldots,N\}$, $\nu>0$, and any temperature distribution $\{T_n\}$, there is a unique steady state with covariance matrix satisfying
\begin{align}\label{eq:defqxcorr}
 & \mean{q_i q_j} = \sum_{n=1}^N B^{(n)}_{1,\nu}(i,j) T_n\,, \quad
 \mean{q_i p_j} = \sum_{n=1}^N B^{(n)}_{2,\nu}(i,j) T_n \\ \nonumber & % \,, \quad
  \mean{p_i p_j} = \sum_{n=1}^N B^{(n)}_{3,\nu}(i,j) T_n\, ,
\end{align}
with the explicit definitions of $B^{(n)}_{a,\nu}$ given in Appendix \ref{sec:scmunpcorr}.  We recall the definition of $r_x=q_{x+1}-q_x$ given in Section \ref{subsec:uc}, as well as the representation ${\mc E}_x = \frac{p_x^2}{2} +\frac{r_x^2}{4} + \frac{r_{x-1}^2}{4}$ of the local energies in terms of these variables.
Therefore, we have
\begin{align*}
 & {\mc E}_j = \frac{1}{2} p_j^2 + \frac{1}{2} {\bf r}^T\! \tilde{A}^{(j)} {\bf r}\, , \quad
  \text{with } % \\ & \quad
\tilde{A}^{(j)}_{xy} = \cf[x=y] \frac{1}{2} (\cf[x=j]+\cf[x=j-1])\, ,\nonumber
\end{align*}
and thus in order to study the local energy correlations it suffices to consider the correlations of variables $r$ and $p$, following the computations in the previous subsection.  We also point out, that since these variables are independent in the equilibrium Gibbs measure, we have $\eqmean{{\mc E}_{j'}; {\mc E}_j}{T}=0$ unless
$|j'-j|\le 1$.

The main new ingredient allowing us to extend the result of \cite{BLL04} to the unpinned case are the bounds proven in Appendix \ref{sec:scmunpcorr}.
There we show that the exponential decay of the pinned correlations will be replaced by a powerlaw decay, whose strength depends on the observable.  Explicitly, we prove there that in addition to (\ref{eq:defqxcorr}) also
\begin{align}\label{eq:defrxcorr}
 & \mean{r_x r_y} = \sum_{n=1}^N B^{(n)}_{5,\nu}(x,y) T_n\,, \quad
 \mean{r_x p_j} = \sum_{n=1}^N B^{(n)}_{6,\nu}(x,j) T_n \, ,
\end{align}
and the relevant $B$-matrices decay in $m:=1+|x-y|+|x-n|+|y-n|$ so that there is a constant $c_0$ independent of $L,x,y,n$ such that for $\nu=0$
\begin{align}\label{eq:rxcorrdecay}
 & \left| B^{(n)}_{5,\nu}(x,y)\right| \le c_0 m^{-2} (1+\ln m) \, ,\quad
 \left| B^{(n)}_{6,\nu}(x,y)\right| \le c_0 m^{-3}(1+\ln m) \, ,\\ \nonumber &
 \left| B^{(n)}_{3,\nu}(x,y)\right| \le c_0 m^{-4} \, .
\end{align}

Instead of going through all the details here, let us only list the changes necessary in the argument used in \cite{BLL04}.  As mentioned before, for any temperature profile there is a unique Gaussian steady state, and the explicit formulae for its covariance matrix are still valid for $\nu=0$.  In addition, Section 3 of  \cite{BLL04} also holds verbatim, since the key bound in Lemma 3.2 does not use the pinning.  This shows that also in the unpinned case there is a unique self-consistent temperature profile which is bounded between the fixed boundary temperatures.  The main changes needed will be to Section 4, since the above bounds produce an additional logarithmic term in the local equilibrium approximations.  By the bounds in (\ref{eq:rxcorrdecay}) we have for $\nu=0$ and $a=3,5,6$ that there is a constant $c'$ such that
\begin{align}
  \sum_{n=1}^N |B^{(n)}_{a,\nu}(x,y)| \le c' (1+|x-y|)^{-\frac{3}{4}}\begin{cases}
                                                     1, &\text{ if } a=5\, , \\
                                                     (1+|x-y|)^{-1}, &\text{ if } a=6\, , \\
                                                     (1+|x-y|)^{-2}, &\text{ if } a=3\, , \\
                                                   \end{cases}
\end{align}
(the power is not optimal but will be sufficient here) and
\begin{align}
  \sum_{n=1}^N |x-n| |B^{(n)}_{a,\nu}(x,y)| \le c' \ln^2 N\begin{cases}
                                                     1, &\text{ if } a=5\, , \\
                                                     (1+|x-y|)^{-1}, &\text{ if } a=6\, , \\
                                                     (1+|x-y|)^{-2}, &\text{ if } a=3\, . \\
                                                   \end{cases}
\end{align}
Thus, in particular,
for $X_x,Y_x\in \{r_x,p_x\}$ the equilibrium and the self-consistent expectations satisfy
\begin{align*}
& \left|\eqmean{X_{x} Y_{y}}{T}\right| \le  C T (1+|x-y|)^{-\frac{3}{4}-b}\, , \quad \\
& \left|\mean{X_{x} Y_{y}}\right| \le C \max(T_\ell,T_r) (1+|x-y|)^{-\frac{3}{4}-b}\, ,
\end{align*}
where $b=0,1,2$, depending on the choice of $X$ and $Y$.
This proves that $\eqmean{{\mc E}_{j'}; {\mc E}_j}{T}$ is uniformly bounded in $N$, and since it is zero unless $|j'-j|\le 1$ we can conclude that (\ref{eq:EEeqlb}) holds in the selfconsistent profile also for $\nu=0$.
Finally, denoting $\vep_N:= \max_{n} | T_{n+1}-T_n|$ we have
\begin{align}
& \left|\mean{X_{x} Y_{y}}-\eqmean{X_{x} Y_{y}}{T_j}\right|
\le C \vep_N (1+|x-y|)^{-b} (1+\ln^2 N +|x-j|)\, .
\end{align}

We can now use these bounds to complete the proof of Fourier's law and derive an estimate of $\vep_N$ for the selfconsistent profile, following section 4.3 of \cite{BLL04}.  There it was shown that the total current only depends on the expectation $\mean{q_1 q_1}-\mean{q_N q_N}$.  Since this is equal to $\mean{r_0 r_0}-\mean{r_N r_N}$, we can conclude from the above estimates that the discussion on page 795 of \cite{BLL04} holds verbatim, and therefore also for $\nu=0$ we have $\vep_N=O(N^{-1/2})$ in the selfconsistent profile.  (There will be an additional $\ln^2 N$ term multiplying the right hand side of (4.15) in \cite{BLL04}, but this will not change the conclusions.)

We can now follow the same steps as in Section \ref{sec:pinnedscc} and conclude that the bound in (\ref{eq:RNbound}) for the remainder holds, after we change $A$ to $\tilde{A}$ and $q$ to $r$ on its right hand side.  Then the above estimates can be applied there, proving that the term is bounded by $N^{-1/4} \ln^2 N$.  The worst term is the fourth one which only has decay $|x'-x|^{-3/4}$ arising from the NESS expectation $\mean{r_{x'} r_x}$, and thus leads to a bound $\vep_N N^{1/4} \ln^2 N$.  This concludes the proof of the claims made in the beginning of this section.

\subsection{Higher dimensions}

The easiest way to extend the selfconsistent system to a square lattice in higher dimensions is to label the particles with 
${\bf x}\in \{1,2,\ldots,N\}^d=: I_N^d$, and then use fixed boundary conditions in the first direction, as before, and periodic boundary conditions for the harmonic dynamics in the other directions.  
We have shown in \cite{BLL04} that this system has a unique selfconsistent profile if all heat baths attached to ${\bf x}$ with $x_1=1$ are fixed to temperature $T_\ell$ and those with $x_1=N$ are fixed to temperature $T_r$.  Moreover, this profile only depends on $x_1$, and the 
different Fourier-modes in the orthogonal directions decouple.  

Explicitly, if we define for ${\bf k}\in I_N^{d-1}$, $j\in I_N$,
\begin{align}
 \widehat{q}_j({\bf k}) := N^{-\frac{d-1}{2}} \sum_{{\bf y}\in I_N^{d-1}} q_{(j,{\bf y})} \rme^{-\ci 2\pi {\bf y}\cdot{\bf k}/N}\, , 
\end{align}
and $\widehat{p}_j({\bf k})$ similarly, then in the selfconsistent steady state
\begin{align}
 \mean{\widehat{q}_{i'}({\bf k'})^* \widehat{p}_i({\bf k)}} = \mean{q_{i'} p_{i}}^{(1;{\bf k})} \cf({\bf k'}={\bf k}) \, ,
\end{align}
where the second mean is taken in the stationary state of the {\em one-dimensional\/} selfconsistent chain with pinning 
$\tilde\nu^2({\bf k}):= \nu^2 + 4 \sum_{i=1}^{d-1} \sin^2(\pi k_i/N)\ge \nu^2$.  Analogous equations are valid for 
$\mean{\widehat{q}_{i'}({\bf k'})^* \widehat{q}_i({\bf k)}}$ and 
$\mean{\widehat{p}_{i'}({\bf k'})^* \widehat{p}_i({\bf k)}}$.

Using the above partially Fourier-transformed variables, we can write the total energy as
\begin{align}
 {\mc H} = \sum_{i=0}^{N+1} \sum_{{\bf k}\in I_N^{d-1}} \Bigl[ |\widehat{p}_i({\bf k)}|^2 + \tilde\nu^2({\bf k}) |\widehat{q}_i({\bf k)}|^2
 + \frac{1}{4} \sum_{|i'-i|=1} |\widehat{q}_{i'}({\bf k)}-\widehat{q}_i({\bf k)} |^2 \Bigr]\, .
\end{align}
Therefore,
\begin{align}
 \nesssc{{\mc H}} =  \sum_{{\bf k}\in I_N^{d-1}}\sum_{i=0}^{N+1}\mean{{\mc E}_{i}({\bf k})}^{(1;{\bf k})} \, ,
\end{align}
where
\begin{align}
{\mc E}_{i}({\bf k}) := \cfrac{p_i^2}{2} + \tilde{\nu}({\bf k})^2 \cfrac{q_i^2}{2} + \cfrac{1}{4} \sum_{i':\, |i'-i|=1} (q_{i'} -q_i)^2\, .
\end{align}
In addition, using the independence of the modes and the property $\widehat{q}_i({\bf k})^*=\widehat{q}_i(-{\bf k})$, we also find
\begin{align}\label{eq:higherdefl}
 \frac{1}{N^d} \nesssc{{\mc H};{\mc H}} = \frac{1}{N^{d-1}} \sum_{{\bf k}\in I_N^{d-1}} \frac{1}{N} \sum_{i',i=0}^{N+1} 
\mean{{\mc E}_{i'}({\bf k}) ; {\mc E}_i({\bf k})}^{(1;{\bf k})} \, .
\end{align}

Therefore, the energy fluctuations are given by a convex combination of the previous one-dimensional fluctuations, only with a varying pinning parameter, $\tilde{\nu}({\bf k})^2\ge \nu^2$.  It follows that the statements made in the beginning of this section generalize to higher dimensions, i.e., we can conclude that
\begin{equation}
%\label{eq:sclr}
% \frac{1}{N} \left[ \mean{{\mc H}^2}-\mean{{\mc H}}^2 \right]
% =  \frac{1}{N} \sum_{j',j=1}^N \left[\eqmean{{\mc E}_{j'}     {\mc E}_j}{T_j}-\eqmean{{\mc E}_{j'}}{T_j}
% \eqmean{{\mc E}_j}{T_j}\right]
%  + R_N\, ,
 \frac{1}{N^d}\nesssc{{\mc H};{\mc H}} =
  \frac{1}{N^d} \sum_{{\bf x'},{\bf x}\in I_N^{d}}
\eqmean{{\mc E}_{{\bf x'}} ;    {\mc E}_{{\bf x}}}{\Tsc_{\bf x}}
 + R_N \, ,
\end{equation}
where the correction term to local equilibrium, $R_N$, is of order
$1/\sqrt{N}$ in the pinned case and $N^{-1/4} \ln^2 N$ in the unpinned case.  The second bound could likely be improved by using (\ref{eq:higherdefl}), but since neither of the one-dimensional bounds is optimal, let us skip such computations.

% \section{Conclusions}
% \label{sec:conc}
% 
\vspace*{1em}

{\textsc{Acknowledgements} \; We thank S. Olla for useful discussions. C. Bernardin acknowledges the support of the French Ministry of Education through the ANR-10-BLAN 0108 grant.  J.~Lukkarinen was supported by the Academy of Finland. The authors thank the Rutgers Unversity, the Fields Institute, the IHES, the IHP, and NORDITA for their hospitality. This work was supported in part by NSF Grants DMR 08-02120 and by AFOSR Grant FA9550-10-1-0131.}

\appendix

\section{The NESS of the velocity-flip model}
\label{sec:ness-mixture}

Let us first consider the harmonic chain in contact with only two heat baths. Its generator is ${\mc A} + {\mc B}_{1,T_\ell} + {\mc B}_{N,T_r}$. It is easy to show that if we start the dynamics from an initial centered Gaussian state ${\tilde \mu}_0$ then the law of the process at time $t$ is given by a centered Gaussian state ${\tilde \mu}_t$. We shall denote by ${\tilde C} (t)$ the $2N \times 2N$ covariance matrix of ${\tilde \mu}_t$. It can be written in the form
\begin{equation*}
{\tilde C} (t) =
\left(
\begin{array}{cc}
{\tilde U} (t) & {\tilde Z} (t)\\
{\tilde Z}^* (t) & {\tilde V} (t)
\end{array}
\right)
\end{equation*}
where ${\tilde U}, {\tilde V}, {\tilde Z}$ are $N \times N$ matrices defined by
\begin{equation*}
{\tilde U}_{i,j} (t) = \int q_{i} q_{j} d{\tilde \mu}_t, \quad {\tilde V}_{i,j} (t) = \int p_{i} p_{j} d{\tilde \mu}_t, \quad {\tilde Z}_{i,j} (t) = \int q_{i} p_{j} d{\tilde \mu}_t .
\end{equation*}
Let $T= (T_\ell +T_r)/2$ and $2\eta= T^{-1} (T_\ell -T_r)$. We introduce the matrices $A$ and $D$ defined by
\begin{equation*}
A =
\left(
\begin{array}{cc}
0 & - I\\
\Phi & R
\end{array}
\right), \quad
D =
\left(
\begin{array}{cc}
0 & 0\\
0 & 2T (R +\eta S)
\end{array}
\right)
\end{equation*}
where the $N \times N$ matrices $R$, $S$ and $\Phi$ are given by $R_{i,j} = \delta_{i,j} (\delta_{i,1} + {\delta}_{i,N})$, $S_{i,j}= \delta_{i,j} ( \delta_{i,1} -\delta_{i,N})$ and $\Phi_{i,j} = (\nu^2 +2) \delta_{i,j} -\delta_{i+1,j} -\delta_{i-1,j}$

Then ${\tilde C} (t)$ is the solution of the differential equation (see \cite{RLL})
\begin{equation}
\label{eq:diffC}
\cfrac{d}{dt} \, {\tilde C} (t) = D -A{\tilde C} -{\tilde C} A^* .
\end{equation}
As $t$ goes to $+\infty$, ${\tilde C} (t)$ converges to a positive definite symmetric matrix ${\tilde C}_{\infty}$ satisfying $D=A {\tilde C}_{\infty} +{\tilde C}_{\infty} A^*$.  We note that ${\tilde C}_{\infty} =T C_{\rm{eq}} (1) + \eta {\hat C}$ where $\hat C$ is independent of $T_\ell, T_r$.

We now define a Markov process $(C(t))_{t \ge 0}$ with state space $\Sigma$, the set composed of the $2N \times 2N$ symmetric non-negative matrices $C$.
This Markov process is a kind of dual process of the velocity flip process.
For any $x \in \{1, \ldots, N\}$ we shall denote by $\Gamma_x$ the diagonal matrix of $\Sigma$ such that $(\Gamma_x)_{i,i}= 1- 2 \delta_{i,N+x}$. Consider $N$ independent Poisson processes  ${\mc N}_x$, $x=1, \ldots,N$, with rate $\gamma /2$ and denote by $\tau_{x} (1) < \ldots, \tau_{x} (k) < \ldots$ the successive times at which ${\mc N}_x$ jumps. Let $\tau_{x_1}^1 <\tau_{x_2}^2< \ldots$ be the sequence composed of all these times ordered in increasing order. The notation used is such that $\tau_{x_j}^j$ is a jump of the Poisson process ${\mc N}_{x_j}$, i.e., $\tau_{x_j}^j =\tau_{x_j} (k_j)$ for the $k_j$-th  jump of the site $x_j$.

The time evolution of $(C(t))_{t \ge 0}$ is given by the following rules: For $\tau_{x_j}^j \le t < \tau_{x_{j+1}}^{j+1}$ the evolution of $C(t)$ is deterministic and prescribed by (\ref{eq:diffC}). At time $t=\tau_{x_{j+1}}^{j+1}$, the value of $C(t)$ is given by $\Gamma_{x_{j+1}} C([\tau_{x_{j+1}}^{{j+1}}]^-) \Gamma_{x_{j+1}}$.
It is easy to check that the generator ${\hat{\mc L}}$ of $({C (t)})_{t \ge 0}$ is given, for any continuously differentiable function $F: \Sigma \to \RR$, by
\begin{equation}
\label{eq:gencapL}
\begin{split}
({\hat {\mc L}} F)(C) &= \sum_{x,y=1}^{2N} \left(D -AC - CA^* \right)_{x,y} \, \partial_{C_{x,y}} F \, (C)\\
&+ \cfrac{\gamma}{2} \, \sum_{x=1}^N \left[ F(\Gamma_x C \Gamma_x) -F(C) \right].
\end{split}
\end{equation}

The law of $(C(t))_{t \ge 0}$ when starting from $C_0$ is denoted by $\rho_t (C_{0}, dC)$ and the centered Gaussian measure on $\Omega_N$ with covariance matrix $C$ is denoted by $G_C$.
\begin{lemma}
Consider the velocity flip model and assume that the initial state is $\mu_0=G_{C_0}$. Then the law $\mu_t$ of the process at time $t$ is given
\begin{equation*}
\mu_t = \int_{\Sigma} G_C  \, \rho_t (C_0, dC).
\end{equation*}
\end{lemma}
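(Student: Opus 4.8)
The plan is to condition on the realization of the Poisson clocks that drive both the velocity flip process and the dual matrix process, and to exploit the fact that the two processes are coupled through these same clocks. Two ingredients, both already available, underlie the argument. First, as noted at the beginning of this appendix, the pure two-bath dynamics generated by $\mc A + \mc B_{1,T_\ell} + \mc B_{N,T_r}$ maps a centered Gaussian state $G_C$ to the centered Gaussian state $G_{\tilde C(t)}$, where $\tilde C(t)$ solves the Lyapunov equation (\ref{eq:diffC}) with $\tilde C(0)=C$. Second, the flip $\omega \mapsto \omega^x$ is the linear involution $\omega \mapsto \Gamma_x \omega$ on $\Omega_N=\RR^{2N}$, so it pushes $G_C$ forward to $G_{\Gamma_x C \Gamma_x}$ (using $\Gamma_x=\Gamma_x^*$). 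These are precisely the two elementary moves that define the dual process $(C(t))_{t\ge0}$ through its generator (\ref{eq:gencapL}); note also that both moves keep the covariance within $\Sigma$, since the conjugation preserves symmetry and non-negativity and the flow produces genuine covariance matrices.

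First I would fix a realization of the $N$ independent Poisson processes $\mc N_x$, which almost surely produces finitely many ordered jump times $0<\tau^1_{x_1}<\tau^2_{x_2}<\cdots$ in any bounded interval $[0,t]$. Conditioned on this realization, both dynamics become piecewise-deterministic. I would then argue by induction on the successive jump intervals that the conditional law of $\omega(s)$ equals $G_{C(s)}$, where $C(\cdot)$ is the (now deterministic) trajectory of the dual process started at $C_0$: on each interval $[\tau^j_{x_j}, \tau^{j+1}_{x_{j+1}})$ the velocity flip process evolves by the pure two-bath dynamics, so by the first ingredient the Gaussian form is preserved and its covariance follows (\ref{eq:diffC}), matching the drift part of (\ref{eq:gencapL}); at the jump time $\tau^{j+1}_{x_{j+1}}$ the velocity at site $x_{j+1}$ is flipped, which by the second ingredient sends $G_{C([\tau^{j+1}_{x_{j+1}}]^-)}$ to $G_{\Gamma_{x_{j+1}} C([\tau^{j+1}_{x_{j+1}}]^-) \Gamma_{x_{j+1}}}$, matching the jump part of (\ref{eq:gencapL}). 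Since the base case ($s$ before the first jump) is the first ingredient applied to $G_{C_0}$, the induction closes and the conditional law at time $t$ is $G_{C(t)}$.

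Finally I would average over the realizations of the Poisson clocks. By construction the clocks driving the velocity flip process are the same as those driving the dual process, so the law of $C(t)$ obtained by this averaging is exactly $\rho_t(C_0, \cdot)$. Integrating the conditional identity over the clocks and interchanging the expectation with the Gaussian integral, justified by Fubini using that $C\mapsto G_C(f)$ is bounded and measurable for bounded continuous $f$ and that there are almost surely finitely many jumps, yields
\[
\mu_t = \EE\bigl[ G_{C(t)} \bigr] = \int_\Sigma G_C \, \rho_t(C_0, dC),
\]
as claimed. The main obstacle is the inductive step: one must verify carefully that the continuous two-bath evolution and the discrete flips dovetail so that the conditional covariance trajectory coincides with a realization of the dual process driven by the same clocks. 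Once this duality is identified and the preservation of $\Sigma$ is checked, the averaging is routine.
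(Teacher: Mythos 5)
Your proposal is correct and follows essentially the same route as the paper: condition on the realization of the Poisson clocks, observe that between flips the two-bath dynamics keeps the law Gaussian with covariance evolving by (\ref{eq:diffC}) while each flip conjugates the covariance by $\Gamma_x$, so that the conditional law at time $t$ is $G_{C(t)}$ with $C(\cdot)$ a realization of the dual process driven by the same clocks, and then average over the clocks. Your added details (the explicit induction over jump intervals and the Fubini step) are refinements of the same argument rather than a different approach.
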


\begin{proof}
The dynamics generated by ${\mc L}$ can be described as follows. Let $({\mc N}_x)_{x \in \{1, \ldots, N\}}$ be the sequence of independent Poisson processes on $(0,+\infty)$ introduced above. During the time interval $[\tau_{x_j}^j, \tau_{x_{j+1}}^{j+1})$ the process $(\omega (t))_{t \ge 0}$  follows the evolution prescribed by ${\mc A}+{\mc B}_{1,T_\ell} +{\mc B}_{1,T_r}$ and, at time $\tau_{x_{j+1}}^{j+1}$, the new configuration is obtained  by flipping the momentum of particle at site $x_{j+1}$. Then the system starts again following the dynamics generated by ${\mc A}+{\mc B}_{1,T_\ell} +{\mc B}_{1,T_r}$  until the time of the next flip, and so on.

Thus, conditionally to the realization of the $(\tau_{x_j}^j)_{j \ge 1}$, starting from $G_{C_0}$, the law of $\omega (t)$ at time $t$ is Gaussian with a covariance matrix $C(t)$ which is obtained by the following scheme: in the time interval $[ \tau_{x_j}^j, \tau_{x_{j+1}}^{j+1})$, $C(t)$ satisfied (\ref{eq:diffC}); at time $\tau_{x_{j+1}}^{j+1}$, the covariance is given by $\Gamma_{x_{j+1}} C( [\tau_{x_{j+1}}^{j+1}]^-) \Gamma_{x_{j+1}}$. This follows from the fact that if $\omega$ has law $G_{C}$ then $\omega^x$ has law $G_{C'}$ with $C' = \Gamma_x C \Gamma_x$.

Hence, the conditional law of the covariance matrix is given by the law of the process $(C(t))_{t\ge 0}$ conditionally to the realization of the Poisson processes. The result follows.
\end{proof}

We can now prove the following

\begin{prop}
There exists a probability measure $\covmeas$, which is invariant for $(C(t))_{t \ge 0}$ and  whose support is included in the set $\Sigma$ composed of definite positive symmetric matrices, such that $\nessvf{\cdot}$ is given by
\begin{equation*}
\nessvf{\cdot} = \int G_C \, \covmeas (dC)
\end{equation*}
\end{prop}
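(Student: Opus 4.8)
The plan is to construct $\covmeas$ as an invariant measure of the dual Markov process $(C(t))_{t\ge 0}$ generated by (\ref{eq:gencapL}), to check that it charges only positive definite matrices, and finally to identify the Gaussian mixture $\int G_C\,\covmeas(dC)$ with the NESS by invoking uniqueness of the stationary state of the velocity-flip model. For existence, note that $(C(t))$ is a piecewise-deterministic Markov process: between Poisson jumps it follows the linear flow (\ref{eq:diffC}), and at each jump it is replaced by $\Gamma_x C \Gamma_x$. Writing $B(t)=C(t)-\tilde C_\infty$ and using $D=A\tilde C_\infty+\tilde C_\infty A^*$, the deterministic part becomes $\dot B=-AB-BA^*$, with solution $B(t)=e^{-A(t-s)}B(s)e^{-A^*(t-s)}$ contracting exponentially since $A$ is stable (its eigenvalues have positive real part, as $\tilde C(t)\to\tilde C_\infty$); thus $\norm{B(t)}\le M^2 e^{-2c(t-s)}\norm{B(s)}$ in the Frobenius norm, for some $M,c>0$. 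Each jump is a congruence by the orthogonal involution $\Gamma_x$, hence preserves $\norm{C}$ and increases $\norm{B}$ by at most $K:=\max_x\norm{\tilde C_\infty-\Gamma_x\tilde C_\infty\Gamma_x}$. Iterating these two bounds along a Poisson realization and taking expectations, the jump contribution is a shot-noise sum whose mean is controlled by $KM^2\,\frac{N\gamma}{2}\int_0^t e^{-2c(t-s)}\,ds\le \frac{KM^2 N\gamma}{4c}$, so that $\sup_{t\ge 0}\EE[\norm{C(t)}]<\infty$. By Markov's inequality this makes the time-averaged laws $\frac1T\int_0^T \rho_t(C_0,\cdot)\,dt$ tight on $\Sigma$, and since the process is Feller, the Krylov--Bogolyubov theorem yields an invariant probability measure $\covmeas$ as a subsequential weak limit.

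For the support, use the explicit representation $\tilde C(t)=e^{-At}\tilde C(0)e^{-A^*t}+\int_0^t e^{-As}D\,e^{-A^*s}\,ds$ of the solution of (\ref{eq:diffC}), in which the integral is the controllability Gramian of the pair $(A,\sqrt D)$. For the harmonic chain with Langevin baths only at the two endpoints this pair is controllable, since the noise injected into the boundary momenta propagates through the coupling $\Phi$, so the Gramian is strictly positive definite for every $t>0$. Hence the flow sends any non-negative matrix into a positive definite one after an arbitrarily short time, while the jumps, being congruences by the invertible $\Gamma_x$, preserve positive definiteness. Therefore $C(t)$ is positive definite for all $t>0$ and any subsequential limit of the occupation measures is carried by the positive definite cone; in particular $\covmeas$ is supported in $\Sigma$ as claimed.

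For the identification, set $\Phi(m):=\int_\Sigma G_C\,m(dC)$ for a probability measure $m$ on $\Sigma$. The previous Lemma states precisely that $P_t^{\mathrm{vf}}\Phi(\delta_{C_0})=\Phi(\rho_t(C_0,\cdot))$, where $P_t^{\mathrm{vf}}$ denotes the velocity-flip evolution; by linearity and Fubini this extends to $P_t^{\mathrm{vf}}\Phi(m)=\Phi\bigl(\int\rho_t(C_0,\cdot)\,m(dC_0)\bigr)$ for every $m$. Applying this to the invariant measure $m=\covmeas$ and using $\int\rho_t(C_0,\cdot)\,\covmeas(dC_0)=\covmeas$ shows that $\Phi(\covmeas)=\int G_C\,\covmeas(dC)$ is stationary for the velocity-flip dynamics. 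Since that model admits a unique stationary state, namely $\musv^N$, we conclude $\nessvf{\cdot}=\int G_C\,\covmeas(dC)$.

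I expect the main obstacle to lie in the existence step: making the tightness estimate fully rigorous for this PDMP (the shot-noise bound together with the Feller property required by Krylov--Bogolyubov), and verifying the controllability that forces the support into the positive definite cone. Once $\covmeas$ and its support are in hand, the identification is essentially formal, relying only on the intertwining relation supplied by the preceding Lemma and on the uniqueness of the NESS.
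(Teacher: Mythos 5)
Your overall architecture --- invariant measure for the dual matrix process via tightness and Krylov--Bogolyubov, then identification of the Gaussian mixture through the Lemma and uniqueness of the NESS --- is the same as the paper's, and your identification step is correct (if anything cleaner than the paper's, which instead identifies $\int G_C\,\covmeas(dC)$ with $\nessvf{\cdot}$ through Ces\`aro averages of $\mu_t$). The genuine gap is in your existence estimate. Iterating your two bounds does \emph{not} produce a shot-noise sum with weights $KM^2\rme^{-2c(t-\tau_i)}$: every flow segment carries its own prefactor $M^2$, so the correct iterate is
\begin{equation*}
\norm{B(t)}\le M^{2(k+1)}\rme^{-2ct}\norm{B(0)}+K\sum_{i=1}^{k}M^{2(k+1-i)}\rme^{-2c(t-\tau_i)}\, ,
\end{equation*}
where $k$ is the number of jumps in $[0,t]$. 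Since jumps occur at rate $N\gamma/2$, the expectation of $M^{2(k+1-i)}$ over the jumps in $(\tau_i,t]$ equals $M^2\rme^{(M^2-1)N\gamma(t-\tau_i)/2}$, so your bound diverges exponentially as soon as $(M^2-1)N\gamma/2>2c$. You cannot take $M=1$ in the Frobenius norm: $A+A^*=\left(\begin{smallmatrix}0&\Phi-I\\ \Phi-I&2R\end{smallmatrix}\right)$ is indefinite, so the flow $B\mapsto \rme^{-At}B\rme^{-A^*t}$ has transient growth, while the dissipation (hence $c$) sits only at the two boundary sites and the jump rate grows like $N$. The estimate can only be salvaged by working with the energy form, for which the jumps are \emph{exact} isometries (flips conserve ${\mc H}$) and the flow is a genuine contraction for all times (the baths only dissipate) --- and making that quantitative is precisely the nontrivial Lyapunov bound the paper imports from \cite{RB,BO2}, namely $\sup_{t\ge 0}\EE_{\omega_0}[W_\alpha(\omega(t))]\le K_0(W_\alpha(\omega_0)+1)$, transferred to the matrix process through $G_C({\mc H})={\rm tr}(U\Phi)+{\rm tr}(V)$, which controls $\max_{i,j}|C_{i,j}|$.

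A second, smaller flaw: in the support step, the inference ``$C(t)$ is positive definite for all $t>0$, hence any subsequential limit of the occupation measures is carried by the positive definite cone'' is a non sequitur, since the cone of strictly positive matrices is open and weak limits of measures carried by an open set can charge its boundary (as $\delta_{1/n}\to\delta_0$ shows). The fix is one line from what you already have: by invariance, $\covmeas(\{\lambda_{\min}=0\})=\int\covmeas(dC_0)\,\rho_t(C_0,\{\lambda_{\min}=0\})=0$, because your controllability argument shows that $\rho_t(C_0,\cdot)$ gives no mass to degenerate matrices for every $C_0$ and every $t>0$ (the Kalman condition for the pair $(A,\sqrt{D})$ does need to be verified, though it is standard for a connected chain). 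Note that the paper handles the support property by an entirely different route --- it invokes the absolute continuity of $\nessvf{\cdot}$ proved in \cite{BO2} --- so your controllability argument, once patched as above, is a self-contained alternative; but the existence step, as written, fails and cannot be repaired without introducing the energy-based Lyapunov structure used in the paper.
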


\begin{proof}
We assume $\nu>0$. The case $\nu=0$ can be treated similarly by considering the new variables $r_j = q_{j+1} -q_{j}$. The existence of an invariant probability measure for $(C(t))_{t \ge 0}$ is based on the existence of a good Lyapunov function.

For any  $\alpha>0$ we consider the function $W_{\alpha} : \Omega_n \to (0,+\infty)$ defined by
\begin{equation*}
W_{\alpha} (\omega)= \exp \left\{ \alpha {\mc H} (\omega) \right\}
\end{equation*}
It can be proved (see e.g.\ \cite{RB} or \cite{BO2}) that, if $\alpha>0$ is sufficiently small, then there exists a constant $K_0 >0$ such that for any initial condition $\omega_0 \in \Omega_n$,
\begin{equation*}
\sup_{t \ge 0} {\mathbb E}_{\omega_0} \left[  W_\alpha ({\tilde \omega} (t)) \right] \le K_0 \left( W_{\alpha} (\omega_0) +1 \right)
\end{equation*}
where $({\tilde \omega} (t))_{t \ge 0}$ is the process generated by ${\mc A} +{\mc B}_{1,T_\ell} +{\mc B}_{N,T_r}$. Since a flip of $p_x$ does not modify the value of the energy ${\mc H} (\omega)$, this inequality remains in force with ${\tilde \omega} (t)$ replaced by the velocity flip system ${\omega} (t)$:
\begin{equation}
\label{eq:boundlya}
\sup_{t \ge 0} {\mathbb E}_{\omega_0} \left[  W_\alpha ({\omega} (t)) \right] \le K_0 \left( W_{\alpha} (\omega_0) +1 \right)\, .
\end{equation}
By integrating $\omega_0$ with respect to $G_{C_0}$, we get
\begin{equation*}
\sup_{t \ge 0} \int \rho_t (C_0, dC) \, G_C \left[ W_{\alpha} \right] \le K_0 \left[ G_{C_0} \left[ W_{\alpha}\right] +1 \right]\, .
\end{equation*}

We choose a matrix $C_0$ such that the RHS of the previous inequality is finite and we remark that $W_{\alpha} \ge \alpha {\mc H}$ so that
\begin{equation*}
\sup_{t \ge 0} \int \rho_t (C_0, dC) \,  G_C \left[ {\mc H} \right] \le K'_0
\end{equation*}
for a suitable positive constant $K'_0$. Let us write the matrix $C$ as
\begin{equation*}
C=\left(
\begin{array}{cc}
{U} & {Z} \\
{Z}^* & {V}
\end{array}
\right)
\end{equation*}
and observe that since $C \in \Sigma$, $\max_{i,j} |C_{i,j}| \le \max_i \{U_{i,i}, V_{i,i} \}$, $U_{i,i}, V_{i,i} \ge 0$. We have  $G_C ({\mc H}) = {\rm tr} (U\Phi) + {\rm tr} (V)={\rm tr} (\Phi^{1/2} U {\Phi}^{1/2}) + {\rm tr} (V)$. Thus, there exists a constant $K''_0$ such that
\begin{equation*}
\sup_{t \ge 0} \int \rho_t (C_0, dC) \, \left[ \max_{i,j} |C_{i,j}|\right] \le K''_0\, .
\end{equation*}

This shows that the sequence of probability measures $\left\{ \rho_t (C_0, dC) \, ; \, t \ge 0 \right\}$ is tight. Consequently, any limiting point $\covmeas$ of the (tight) sequence
$$\left\{ {\bar \rho}_{t} (C_0, dC) := \cfrac{1}{t}  \int_0^t \rho_y (C_0, dC) \, dy  \, ; \, t \ge 0 \right\}$$
is an invariant probability measure of the process $(C(t))_{t \ge 0}$. Moreover, it is easy to show that $\int \covmeas (dC) \, G_C$ is equal to $\nessvf{\cdot}$. This follows from the fact that if $f:\Omega_n \to \RR$ is a bounded continuous function then $F: C \in \Sigma \to G_{C} (f)$ is a bounded continuous function on $\Sigma$, so that, if $t'$ is a sequence of times such that ${\bar \rho}_{t'}$ converges to $\covmeas$, we have
\begin{equation*}
\begin{split}
\nessvf{f}&= \lim_{t' \to \infty} \cfrac{1}{t'} \int_0^{t'} dy \, \mu_y (f)=\lim_{t' \to \infty} \cfrac{1}{t'} \int_0^{t'} dy \int \rho_y (C_0, dC)\,  G_C (f)\\
&= \lim_{t' \to \infty} {\bar \rho}_{t'} (F) = \covmeas (F)= \int_{\Omega_n} f(\omega) \, \left( \int \covmeas (dC) \, G_C (d\omega) \right) \, .
\end{split}
\end{equation*}
It has been proved in \cite{BO2} that $\nessvf{\cdot}$ has a density with respect to the Lebesgue measure. This implies that the support of $\covmeas$ is contained in the set $\Sigma$.
\end{proof}

\section{Stochastic calculus}
\label{sec:asc}
Here we give a proof of results we use to get in one of the steps involved in obtaining hydrodynamical equations below equation (\ref{eq:fd}) and the identities for the martingales following equation (\ref{eq:324})
We begin by stating the following standard lemma whose a proof can be found for example in \cite{KL} or \cite{RY}.
\begin{lemma}
\label{lem:sto}
If $A(\omega)$, $B (\omega)$ are two local functions of the configuration $\omega=(\br,\bp)$, then
 \begin{equation*}
 \begin{cases}
 A(\omega (tN^2)) - A(\omega (0))= N^2 \int_{0}^{t} ({\mc L}A) (\omega (s N^2)) ds +M^A_t \\
  B(\omega (tN^2)) - B(\omega (0))= N^2 \int_{0}^{t} ({\mc L} B) (\omega (s N^2)) ds +M^B_t
 \end{cases}
 \end{equation*}
where $M_t^A, M_t^B$ are two centered martingales with quadratic variations given by
\begin{align*}
\langle M_t^A M_t^B \rangle = \cfrac{\gamma N^2}{2} \sum_z \int_{0}^t  ds\,
\left[ A(\omega^z (sN^2) -A(\omega (sN^2) )\right] \\ \quad \times
\left[ B(\omega^z (sN^2) -B(\omega (sN^2) )\right] \, .
\end{align*}
\end{lemma}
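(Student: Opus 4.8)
\emph{Proof strategy.} The plan is to view $(\omega(tN^2))_{t\ge 0}$ as the explicit jump--diffusion generated by $N^2{\mc L}$ and to read off both the bounded-variation part and the bracket from It\^o's formula; this is the standard route behind the references \cite{KL,RY}. First I would write down the stochastic differential satisfied by $\omega(t)$ under ${\mc L}={\mc A}+\gamma{\mc S}+{\mc B}_{1,T_\ell}+{\mc B}_{N,T_r}$: the two Langevin baths drive $p_1$ and $p_N$ through independent Brownian motions of strength $\sqrt{2T_\ell}$ and $\sqrt{2T_r}$ respectively, the velocity flips are triggered by independent Poisson clocks of rate $\gamma/2$ (for which $\gamma{\mc S}$ is the jump generator), and between these events the motion follows the deterministic Hamiltonian flow generated by ${\mc A}$. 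Applying It\^o's formula for jump--diffusions to $A(\omega(tN^2))$ and collecting the finite-variation terms reproduces exactly $N^2\int_0^t({\mc L}A)(\omega(sN^2))\,ds$, while the stochastic integrals assemble into a (local) martingale $M_t^A$; the same computation applies to $B$. This yields the first display.

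For the bracket I would use the carr\'e du champ operator $\Gamma_{\mc G}(A,B):={\mc G}(AB)-A\,{\mc G}B-B\,{\mc G}A$, through the identity $\langle M_t^A,M_t^B\rangle=\int_0^t\Gamma_{N^2{\mc L}}(A,B)(\omega(sN^2))\,ds$ together with linearity $\Gamma_{N^2{\mc L}}=N^2\Gamma_{\mc L}$. Splitting the generator, the Liouville part ${\mc A}$ is a first-order derivation and hence satisfies the Leibniz rule, so $\Gamma_{\mc A}\equiv 0$; likewise the drift terms $-p_j\partial_{p_j}$ of the baths drop out. Expanding $\Gamma_{\gamma{\mc S}}$ and using $A(\omega^z)B(\omega^z)-A(\omega)B(\omega^z)-B(\omega)A(\omega^z)+A(\omega)B(\omega)=[A(\omega^z)-A(\omega)][B(\omega^z)-B(\omega)]$ produces precisely $\frac{\gamma}{2}\sum_z[A(\omega^z)-A(\omega)][B(\omega^z)-B(\omega)]$, which after the rescaling factor $N^2$ is the asserted formula. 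The only term not displayed is the diffusive contribution $2N^2\int_0^t[T_\ell(\partial_{p_1}A)(\partial_{p_1}B)+T_r(\partial_{p_N}A)(\partial_{p_N}B)](\omega(sN^2))\,ds$ coming from the second-order parts of the baths; being supported at the two boundary sites it is negligible in the applications made here, where $A,B$ are bulk observables summed over $O(N)$ sites, and I would simply note that it is omitted.

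The main difficulty is not this algebra but the integrability required to upgrade the a priori \emph{local} martingales to genuine square-integrable martingales carrying the stated bracket, since $A,B$ are unbounded polynomials on the non-compact phase space $\Omega_N=\RR^N\times\RR^N$. The plan is to control moments uniformly in time through the exponential Lyapunov function $W_\alpha=\exp(\alpha{\mc H})$ and the bound (\ref{eq:boundlya}) of Appendix \ref{sec:ness-mixture}: any local polynomial is dominated by $W_\alpha$ for small $\alpha>0$, which gives $\sup_{s\le t}\EE[A(\omega(sN^2))^2]<\infty$ and analogous control on the jump increments. A stopping-time localization (stopping when ${\mc H}$ first exceeds a level) followed by dominated convergence then removes the localization and validates both displays. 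For fixed $N$ this is routine; the delicate point for the hydrodynamic applications is to keep these estimates uniform in $N$, which is again guaranteed by the $N$-independence of the Lyapunov bound.
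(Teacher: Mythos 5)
Your proposal is correct and follows essentially the route the paper itself takes: the paper gives no proof of Lemma \ref{lem:sto} at all, simply citing \cite{KL} and \cite{RY} for this ``standard lemma'', and what you spell out -- Dynkin/It\^o's formula for the jump--diffusion generated by $N^2 {\mc L}$, the carr\'e du champ identity $\Gamma_{\gamma {\mc S}}(A,B)=\frac{\gamma}{2}\sum_z\left[A(\omega^z)-A(\omega)\right]\left[B(\omega^z)-B(\omega)\right]$ together with the vanishing of $\Gamma$ on the first-order parts ${\mc A}$ and $-p_j\partial_{p_j}$, and Lyapunov-function control to upgrade local martingales to true square-integrable ones -- is precisely that standard argument. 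Your further observation is also accurate and worth keeping: the bracket as printed omits the diffusive boundary-bath contribution $2N^2\int_0^t\left[T_\ell(\partial_{p_1}A)(\partial_{p_1}B)+T_r(\partial_{p_N}A)(\partial_{p_N}B)\right](\omega(sN^2))\,ds$, so the lemma is exact only for observables independent of $p_1,p_N$, while in the applications of Appendix \ref{sec:asc} (where $A,B$ carry a $1/\sqrt{N}$ normalization and the test functions vanish at the boundary) this term is of order $O(N^{-1})$ and therefore harmless, which is presumably why it is suppressed under the $\approx$ signs there.
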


The fluctuation-dissipation equations give
\begin{equation}
\begin{split}
&{\mc L} ({\mc E}_x) =\Delta \phi_{x+1} + {\mc L} (h_x -h_{x+1}),\\
&{\mc L} (r_x) = \gamma^{-1} \Delta r_x + \gamma^{-1} {\mc L} (p_x -p_{x+1})
\end{split}
\end{equation}
We apply Lemma \ref{lem:sto} with
\begin{equation}
\begin{split}
& A (\omega) = \cfrac{1}{\sqrt N} \sum_x F (x/N) (r_x + \gamma^{-1} \nabla p_x)\, ,\\
& B(\omega) = \cfrac{1}{\sqrt N} \sum_x G (x/N) ({\mc E}_x + \nabla h_x)\, .
\end{split}
\end{equation}
Let $F,G : [0,1] \to \RR$ be two test functions vanishing on the boundary.  Then
\begin{align}
& N^{-1/2} \sum_x F(x/N) (r_x +\gamma^{-1} \nabla p_x ) (tN^2) \\ \nonumber
&\qquad - N^{-1/2} \sum_x F(x/N) (r_x +\gamma^{-1} \nabla p_x ) (0)\\ \nonumber
& \quad \approx \gamma^{-1} \int_0^t N^{-1/2} \sum_x  F'' (x/N) r_x (sN^2) ds  +{\mc M}_1^N (t)
\end{align}
and
\begin{align}
& N^{-1/2} \sum_x G(x/N) ({\mc E}_x +\nabla h _x ) (tN^2) \\ \nonumber
&\qquad  - N^{-1/2} \sum_x G (x/N) ({\mc E}_x +\nabla h_x ) (0) \\ \nonumber
&\quad \approx \gamma^{-1} \int_0^t N^{-1/2} \sum_x  G'' (x/N) \phi_x (sN^2) ds  +{\mc M}_2^N (t)\, ,
\end{align}
where ${\mc M}^N_1, {\mc M}_2^N$ are two martingales whose quadratic variations are given by
 \begin{equation*}
 \begin{split}
& \langle M_1^N (t)  M_1^N (t)  \rangle = \cfrac{\gamma N^2}{2} \sum_z \int_{0}^t  ds \left[ A(\omega^z (sN^2) -A(\omega (sN^2) )\right] ^2\\
 &  \langle M_2^N (t)  M_2^N (t)  \rangle = \cfrac{\gamma N^2}{2} \sum_z \int_{0}^t  ds \left[ B (\omega^z (sN^2) -B (\omega (sN^2) )\right] ^2\\
 &\langle M_1^N (t)  M_2^N (t) \rangle = \cfrac{\gamma N^2 }{2} \sum_z \int_{0}^t ds \left[ A(\omega^z (sN^2) -A(\omega (sN^2) )\right] \\ \nonumber
&\qquad \times  \left[ B(\omega^z (sN^2) -B(\omega (sN^2) )\right]
 \end{split}
 \end{equation*}

 In fact, the two terms
 \begin{equation*}
 N^{-1/2} \sum_x F(x/N) (\nabla p_x ) (tN^2) - N^{-1/2} \sum_x F(x/N) (\nabla p_x ) (0)
 \end{equation*}
and
 \begin{equation*}
 N^{-1/2} \sum_x G(x/N) (\nabla h _x ) (tN^2) - N^{-1/2} \sum_x G (x/N) (\nabla h_x ) (0)
 \end{equation*}
 are small in $N$ (just perform a discrete integration by parts and use the smoothness of $F$ and $G$).
 Using again the smoothness of the functions $F$ and $G$ a simple computation shows that
 \begin{align}
 \label{eq:416}
 & \langle \left[ {\mc M}_1^N (t) \right]^2 \rangle \approx  \cfrac{2}{\gamma N} \sum_z \int_0^t (F')^2 (z/N) \langle p_{z}^2  (sN^2) \rangle ds \, ,
 \\ \nonumber &
 \langle \left[ {\mc M}_2^N (t) \right]^2 \rangle \approx  \cfrac{1}{2 \gamma N } \sum_z \int_0^t ds (G')^2 (z/N)
 \\ \nonumber &\quad \times
  \left\langle p_{z} (sN^2) \left(r_{z} (sN^2) +r_{z-1} (sN^2) \right)^2 \right\rangle  \, ,
 \\  \nonumber
   & \langle {\mc M}_1^N (t)  {\mc M}_2^N (t) \rangle \approx  \cfrac{1}{\gamma N } \sum_z \int_0^t ds (F') (z/N) (G') (z/N)
  \\ \nonumber &\quad \times\left\langle p_{z}^2  (sN^2) \left(r_{z} (sN^2) +r_{z-1} (sN^2) \right) \right\rangle  \, .
 \end{align}
This justifies the limits for the martingales following equation (\ref{eq:324}) listed in Section \ref{sec:dynfluc}.

 \section{Proof of equation (\ref{eq:sf})}
 \label{sec:ap}

 Since as $t \to \infty$, $u(q,t) \to 0$ and ${\varepsilon} (q,t) \to {\bar T} (q)$, $({\mc R}_{ss}, {\mc Y}_{ss})$ is the stationary solution of
\begin{equation*}
\begin{cases}
{\partial}_t {\mc R} = \gamma^{-1} \partial_q^2 {\mc R} -\partial_q \Bigl( \sqrt{\frac{2 {\bar T}}{\gamma}} W_1\Bigr)\, ,\\
\partial_t {\mc Y} = (2\gamma)^{-1} \partial_q^2 {\mc Y} -\partial_q \Bigl( \frac{\bar T}{\sqrt{\gamma}} W_2 \Bigr)\, .
\end{cases}
\end{equation*}
We prove the statement in (\ref{eq:sf}) only for the energy fluctuations field, the proof of the other equation being similar.

Let $G(q,t)=G_t (q)$ be the solution of
\begin{equation}
\begin{cases}
\partial_t G =(2\gamma)^{-1} \partial_q^2 G, \\
G(0,t)=G(1,t)=0,\\
G(q,0)=G_0 (q)\, .
\end{cases}
\end{equation}
Then  $G_t (q) = e^{\frac{t}{2\gamma} \Delta_0} G_0 (q)$, and we have
\begin{equation}
{\mc Y}_t (G_0)= {\mc Y}_{0} (G_t) +\frac{1}{\sqrt \gamma} \int_0^t \int_{[0,1]} {\bar T} (q) (\partial_q G_{t-s}) (q) W_2 (dq,ds) \, .
\end{equation}
As $t$ goes to infinity, $G_{t}$ goes to the zero function so that ${\mc Y}_0 (G_t)$ goes to $0$.   The second term converges to a Gaussian variable with variance equal to
\begin{align*}
& {\Sigma} (G_0)= \frac{1}{\gamma} \int_0^{\infty} ds \int_{[0,1]} dq\, {\bar T}^2 (q) (\partial_q G_s)^2 (q)\\
&\quad = -\frac{1}{\gamma} \int_0^{\infty} ds \int_{[0,1]} dq \left\{  G_s (q) \left( {\bar T}^2 (q) (\partial^2_q G_s)(q) + 2{\bar T} (q) {\bar T}' (q) (\partial_q G_s) (q) \right)\right\}\\
&\quad =-2\int_0^{\infty} ds \int_{[0,1]} dq {\bar T}^2 (q) G_s (q) (\partial_s G_s)(q)\\
&\qquad -\frac{2}{\gamma} (T_r -T_\ell) \int_0^{\infty} ds \int_{[0,1]} dq {\bar T} (q) (\partial_q G_s) (q) {G}_s (q)\\
&\quad = - \int_{[0,1]} dq {\bar T}^2 (q) \left\{ G_{\infty}^2 (q) -G_0^2 (q)  \right\} \\
&\qquad -\frac{1}{\gamma} (T_r -T_\ell) \int_0^{\infty} ds \int_{[0,1]} dq {\bar T} (q) (\partial_q G^2_s) (q) \\
&\quad = \int_{[0,1]} dq {\bar T}^2 (q) G_{0}^2 (q) + \frac{1}{\gamma} (T_r -T_\ell)^2  \int_0^{\infty} ds \int_{[0,1]} dq G_s^2 (q)
\end{align*}
Since $G_s = e^{\frac{s}{2\gamma} \Delta_0} G_0$, it follows that
\begin{align*}
&\int_0^{\infty} ds \int_{[0,1]} dq G_s^2 (q) = \int_0^{\infty} ds \int_{[0,1]} dq \left(e^{\frac{s}{2\gamma} \Delta_0} G_0 \right) (q) \, \left(e^{\frac{s}{2\gamma} \Delta_0} G_0 \right) (q)\\
&=  \int_0^{\infty} ds \int_{[0,1]} dq \left(e^{\frac{s}{\gamma} \Delta_0} G_0 \right) (q) G_0 (q)\\
&= \gamma \int_{[0,1]} dq G_{0} (q) ( (-\Delta_0)^{-1} G_0 ) (q) \, .
\end{align*}
This completes the proof of second equality in (\ref{eq:sf}).

\section{Decay of correlations in the unpinned selfconsistent model}\label{sec:scmunpcorr}

We begin from the expressions given in (\ref{eq:defqxcorr}).  By the results derived in Section 4 in \cite{BLL04}, we have there for $a=1,2,3$ and $\nu>0$
\begin{align}\label{eq:defbna}
 & B^{(n)}_{a,\nu}(i,j) \\ \nonumber & = g(i-n,j-n)+g(i+n,j+n)-g(i-n,j+n)-g(i+n,j-n)\, ,
\end{align}
with $g=g_{N,a,\nu}$ depending on $L=2 (N+1)$ and given by
\begin{align}\label{eq:defgnn}
  g(n',n) := \sum_{m',m\in \Z} \hat{f}_{a,\nu}(n' + m' L,n+m L)\, .
\end{align}
The function is obviously $L$-periodic, so may we assume that its arguments have been modified so that $|n|$, $|n'|\le N+1$.
Here $\hat{f}$ denotes a discrete Fourier transform of an analytic function whose explicit form depends on the ``block'' $a$:
for all $x,y\in \Z$, we define
\begin{align} \label{eq:deffanu}
\hat{f}_{a,\nu}(x',x) := \frac{1}{(2\pi)^{2}} \int_{-\pi}^\pi \rmd p' \int_{-\pi}^\pi \rmd p\, f_{a,\nu}(p',p) \rme^{\ci (p'x'+p x)}\,
\end{align}
with $z=1-\cos p$, $w=1-\cos p'$, and
\begin{align}
% & c_k= 1-z_k,\quad s_k = \sqrt{z_k (2-z_k)} \, ,\\
 & f_{1,\nu}(p',p) := \frac{1}{(z-w)^2+z+w+\nu^2}\, ,\\
 & f_{2,\nu}(p',p) := \frac{z-w}{(z-w)^2+z+w+\nu^2}\, ,\\
 & f_{3,\nu}(p',p) := 1-\frac{(z-w)^2}{(z-w)^2+z+w+\nu^2}\, .
\end{align}
Therefore, for all $x,y\in\{0,1,\ldots,N\}$, $j\in\{1,2,\ldots,N\}$, we also have (recall that $r_x=q_{x+1}-q_x$)
\begin{align}%\label{eq:defrxcorr}
 & \mean{r_x r_y} = \sum_{n=1}^N B^{(n)}_{5,\nu}(x,y) T_n\,, \quad
 \mean{r_x p_j} = \sum_{n=1}^N B^{(n)}_{6,\nu}(x,j) T_n \, ,
\end{align}
where $B^{(n)}_{5,\nu}$ and $B^{(n)}_{6,\nu}$ are defined as above, using
\begin{align}
 & f_{5,\nu}(p',p) := (\rme^{\ci p'}-1)(\rme^{\ci p}-1)\frac{1}{(z-w)^2+z+w+\nu^2}\, ,\\
 & f_{6,\nu}(p',p) := (\rme^{\ci p'}-1) \frac{z-w}{(z-w)^2+z+w+\nu^2}\, .
\end{align}
(These formulae hold for $x,y\in \{0,N\}$ since any of the previous functions $g$ is $2(N{+}1)$-periodic and symmetric under $x\to -x$, separately in both of its arguments, which implies that $B^{(n)}_{1,\nu}(i,j)$ and $B^{(n)}_{2,\nu}(i,j)$ vanish if either $i$ or $j$ belongs to $\{0,N+1\}$.)

We only need to consider $0<\nu<L^{-1}\le 1/4$, since $\nu\to 0$ is to be taken before $L\to \infty$.  In all of the following, we assume this to hold.
Since for all $|p|\le \pi$ we have $|\sin (p/2)|\ge |p|/\pi$, it follows that $z\ge 2 \pi^{-2} p^2$, and thus
$(z-w)^2+z+w+\nu^2\ge 2 \pi^{-2} (p^2 + (p')^2 + \nu_0^2)$, $\nu_0 = \pi \nu/\sqrt{2}$ on the whole integration region.  Similarly, as $|\sin x|\le |x|$ for all $x$, $|z-w|\le (p^2 + (p')^2)/2$.  We also have for $p=\alpha+\ci \beta$, $\beta\ge 0$, $\alpha\in \R$, that
$|\rme^{\ci p}-1|=|\rme^{-\beta}-\rme^{-\ci \alpha}|\le \beta + |\alpha|$ and thus also
$|\rme^{-\ci p}-1|\le \rme^\beta (\beta + |\alpha|)$.
Such bounds will be used frequently and without further mention in the following.

The worst decay estimate will be obtained when $f=f_{5,\nu}$.  In this case, it is likely that for $\nu=0$ the sum in (\ref{eq:defgnn}) is no longer absolutely convergent, which will complicate the analysis.  Since then $g(n,n')=g(n',n)$ it will be sufficent to consider the case $|n|\le |n'|\le N+1$, which we shall do in the following.  We begin by inserting the identity $1=\cf[|m'|{=}|m|]+\cf[|m'|{<}|m|]+\cf[|m'|{>}|m|]$, which leads to the following decomposition
\begin{align}\label{eq:gnnsplit}
  & g(n',n) =  \hat{f}(n',n) + \sum_{\sigma,\sigma' \in \{\pm 1\} } \sum_{m=1}^\infty \hat{f}(n'+\sigma' m L,n+ \sigma m L)
 \\ \nonumber & \quad + \sum_{\sigma' \in \{\pm 1\} } \sum_{M=0}^\infty \sum_{m=-M}^M
 \Bigl(\hat{f}(n' + \sigma' (M+1) L,n+m L)+  \\  \nonumber & \qquad\quad
\hat{f}(n + \sigma' (M+1) L,n'+m L)\Bigr)
\, .
\end{align}
By construction, in the last sum the first argument has always a larger magnitude than the second:  since $|m|\le M$, we have
$|n'+\sigma'(M+1) L|\ge M L+L-|n'| \ge |m|L+L-|n'| \ge |n+m L|+L-|n'|-|n|\ge |n+m L|$.

Let us then estimate the first term, $\hat{f}(n',n)$, which will turn out to be dominant.  We denote $\sigma'=\text{sign}(n')$, $\sigma=\text{sign}(n)$,
$k'=|n'|$, and $k=|n|$, and recall that then $k\le k'$.  By a change of variables $p\to \sigma p$,
$p'\to \sigma' p'$, which leave $z$ and $w$ invariant,
we thus have
\begin{align*}
& \hat{f}(n',n)= \int_{|p|,|p'|\le \pi} \frac{\rmd p' \rmd p}{(2\pi)^{2}}\,
 \rme^{\ci (k p+ k' p')} \frac{(\rme^{\ci \sigma ' p'}-1)(\rme^{\ci \sigma p}-1)}{(z-w)^2+z+w+\nu^2}
\, .
\end{align*}
Computing the roots of the polynomial in the denominator yields
\begin{align}\label{eq:polfact}
 & (w-z)^2+w+z+\nu^2= (w-w_-)(w-w_+),\quad \text{ where }\\ \nonumber
 & w_\pm = z - \frac{1}{2} \pm \frac{1}{2}\sqrt{1-8 z- 4 \nu^2}\, .
\end{align}
Here $\sqrt{z}$ denotes the principal branch of the complex logarithm, with values such that $\re \sqrt{z}\ge 0$ and $\sqrt{-1}=\ci$.
%Then the map $z\mapsto \sqrt{1-8 z-4\nu^2}$ is analytic in the complement of the branch cut consisting of real
%$z\ge \frac{1}{8}-\frac{\nu^2}{2}$.  In addition,
Then $w_+$ is always at least a distance $\nu^2$ away from $[0,2]$ and $w_-$ always at least a distance $\frac{1}{4}$ away; more precisely
\begin{enumerate}
  \item If $0<z\le \nu^2$, then $ -1\le w_-\le -\frac{3}{8}$ and $-6 \nu^2 \le w_+ \le -\nu^2$.
  \item If $\nu^2<z\le \frac{1}{8}-\frac{\nu^2}{2}$, then $ -1\le w_-\le -\frac{3}{8}$ and $-4 z \le w_+ \le -z$.
  \item If $\frac{1}{8}-\frac{\nu^2}{2}<z\le \frac{1}{4}$, then $\re w_\pm \le -\frac{1}{4}$.
  \item If $z> \frac{1}{4}$, then $|\im w_\pm| > \frac{1}{2}$.
\end{enumerate}

For $z\le \frac{1}{16}<\frac{1}{8}-\frac{\nu^2}{2}$ we define $\bar{z}= \max(z,\nu^2)$, which implies that then $-w_+\ge \bar{z}$.  Then
if $p'=\alpha+\ci \beta$ with $\beta=\sqrt{\bar{z}/2}$, we have $\beta\le 1$, and thus also
$\cosh \beta \le 1+\beta^2 \le 1+\bar{z}/2$.  Therefore, then there are pure constants $C',C$ that $|(w-z)^2+w+z+\nu^2|\ge C' (1-\cos \alpha + \bar{z}/2) \ge C (\alpha^2+\beta^2)$.  By possibly adjusting constants, this inequality continues to hold for all $z$ and $\alpha$, after we choose
$\bar{z}=z_0$ for $z> \frac{1}{16}$, where $z_0>0$ is a suitably small but fixed constant.  Then there is a constant $c_0>0$ such that $\beta \ge c_0 |p|$ for all $|p|\le \pi$.  Therefore,
\begin{align*}
& |\hat{f}(n',n)| \le
C_1 \int_{|p|\le \pi} \rmd p\,  |p| \rme^{-\beta k'} \int_{-\pi}^\pi \rmd \alpha\, \rme^{\beta}\frac{|\alpha|+\beta}{\alpha^2+\beta^2} \\
& \quad \le
C_2 \int_{0}^{\pi} \rmd p\, p \rme^{-\beta k'}
\int_{0}^{\pi/\beta} \rmd s\, \frac{1+s}{1+s^2} \\
& \quad \le
C_3 \int_{0}^{\pi} \rmd p\, p \rme^{-c_0 p k'} (1+\ln p^{-1}) \\
& \quad \le
C_4 (1+k')^{-2} (1+\ln (1+k')) \, .
\end{align*}
This proves that the contribution of the first term has an appropriate upper bound.  Since both of the arguments in the sum in the second term have absolute values between $(m-\frac{1}{2})L$ and $(m+\frac{1}{2})L$, this bound also proves that the second term is bounded by
\begin{align*}
C'_4 L^{-2} (1+\ln L) \le C_4' (1+|n'|)^{-2} (1+\ln (1+|n'|))\, ,
\end{align*}
which is also of the appropriate form.

Let us then consider the first term in the final sum in (\ref{eq:gnnsplit}).  We begin the analysis by summing over $m$, which yields the Dirichlet kernel in the integrand, amounting to changing inside the defining integral
\begin{align*}
  \rme^{\ci n p} \to  \sum_{m=-M}^M \rme^{\ci (n+mL) p} =  \rme^{\ci n p}  D_M(L p)\, , \text{ where }
D_M(x) := \frac{\sin((M+\frac{1}{2}) x)}{\sin(x/2)}\, .
\end{align*}
Since $D_M(x)$ is an even function, this yields the following explicit integral to be considered
\begin{align*}
& \int_{|p|,|p'|\le \pi} \frac{\rmd p' \rmd p}{(2\pi)^{2}}\, D_M(L p) \left[\cos( (n+1) p) - \cos(n p)\right] \\ & \quad \times
 \rme^{\ci (M L + L+\sigma' n') p'} \frac{\rme^{\ci \sigma' p'}-1}{(z-w)^2+z+w+\nu^2}
\, .
\end{align*}
As $|D_M(L p)|\le 2 M +1$ and $|\cos( (n+1) p) - \cos(n p)|\le (1+|n|) p^2$, by using the above deformation of the $p'$ integration contour, we obtain that it is bounded by
\begin{align*}
C_5 M (1+|n|) (1+k')^{-3} (1+\ln k') \, ,
\end{align*}
where $k' = M L + L+\sigma' n'\ge L(M+1/2)$.  As the sum of $\ln (1+M)/(1+M)^2$ over $M=0,1,\ldots$ is finite and $|n|\le L/2$, this proves the the first term in the final sum in (\ref{eq:gnnsplit}) is bounded by
\begin{align*}
C_6 L^{-2} (1+\ln L) \le C_6 (1+|n'|)^{-2} (1+\ln (1+|n'|))\, .
\end{align*}
Since the only property of $n,n'$ used above was that they are bounded by $L/2$, the same bound is valid also for the second term in the final sum.

The function $r\mapsto r^2 (1+\ln r^{-1})$ is increasing for $r\in [0,1]$ and thus the above bound is decreasing in $|n'|$.  Since $\max(|n|,|n'|)\ge (|n|+|n'|)/2$,
we can conclude that there is a constant $c_5$ such that for all $n,n'\in \Z$ we have
\begin{align}
  |g(n',n)|\le c_5  M^{-2} (1+\ln M)\, ,
\end{align}
where $M=1+|n \bmod L|+|n' \bmod L|$.  This is the analogue of the exponential bound quoted in (\ref{eq:fnexpbound}).  Therefore, applying this bound and the inequality in (\ref{eq:persum}) to (\ref{eq:defbna}) proves the first of the bounds stated in (\ref{eq:rxcorrdecay}), namely that there is $c_5'$ such that
\begin{align}%\label{eq:defbna}
 & |B^{(n)}_{5,0}(i,j)| \le c'_5 m^{-2} (1+\ln m)\, ,
\end{align}
where $m=1+|n-i|+|n-j|+|i-j|$.

For the other two terms ($a=3,6$) the analysis is slightly easier, since the bounds will be summable over the Fourier indices.  Consider the integral defining $\hat{f}_{a,\nu}(n,'n)$, and swap the integration signs as before.  If $a=6$ and $|n'|\ge |n|$, we decompose the integrand first into regions with $z> \frac{1}{16}$ and $z\le \frac{1}{16}$.  For
$z\le \frac{1}{16}$ we decompose further
using (\ref{eq:polfact}) and $z-w=z-w_+ - (w-w_+)$, yielding
\begin{align*}
 & \frac{z-w}{(w-z)^2+w+z+\nu^2} = \frac{z-w_+}{\sqrt{1-8z-4\nu^2}} \frac{1}{w-w_+} \\
&\quad -\frac{1}{w-w_-}\left(\frac{z-w_+}{\sqrt{1-8z-4\nu^2}}+1\right)\, .
\end{align*}
Here for the second piece, as well as for the case $z> \frac{1}{16}$, we can then shift the integration contour of $p'$ using $\beta=\beta_0$ which is a small fixed constant.  These terms have an exponential bound $C\rme^{-\beta_0 |n'|}$.  The remaining piece we estimate using the same $\beta=\beta(z)$ as before, which proves that
\begin{align*}
& |\hat{f}_{6,\nu}(n',n)| \le
C \left(\rme^{-\beta_0 |n'|} + \int_{|p|\le \frac{1}{16}} \rmd p\,  (z+\nu^2) \rme^{-\beta |n'|} \int_{-\pi}^\pi \rmd \alpha\, \rme^{\beta}\frac{|\alpha|+\beta}{\alpha^2+\beta^2}
\right) \\
& \quad \le
C \left(\rme^{-\beta_0 |n'|} + \nu^3 \ln \nu^{-1} \rme^{-\nu |n'|} + (1+|n'|)^{-3} (1+\ln (1+|n'|))
\right) \, ,
\end{align*}
where we have used the facts that for $|p|\le \frac{1}{16}$ we have $|z-w_+|\le C (z+\nu^2)$, and that $\beta=\nu$ and $z\le \nu^2$ if $|p|\le \nu$.
If $a=6$ and $|n|> |n'|$, we need to swap the order of integration, but we can still apply the same decompositions as above yielding
\begin{align*}
& |\hat{f}_{6,\nu}(n',n)| \le
C \left(\rme^{-\beta_0 |n|} + \int_{|p|\le \frac{1}{16}} \rmd p\,  (z+\nu^2) |p| \rme^{-\beta |n|} \int_{-\pi}^\pi \rmd \alpha\, \frac{1}{\alpha^2+\beta^2}
\right) \\
& \quad \le
C \left(\rme^{-\beta_0 |n|} + \nu^3 \rme^{-\nu |n|} + (1+|n|)^{-3}
\right) \, .
\end{align*}
Therefore, $|\hat{f}_{6,\nu}(n',n)| \le C [\rme^{-\beta_0 k'} + \nu^3 \ln \nu^{-1} \rme^{-\nu k'} + (1+k')^{-3} (1+\ln (1+k'))]$ where $k'=\max(|n|,|n'|)$, and thus
\begin{align*}
& |g(n',n)| \le  C [M^{-3} (1+\ln M)+\rme^{-\beta_0 M/2} + L^{-2} \nu \ln \nu^{-1} + L^{-3} (1+\ln L)]
\end{align*}
where $M=1+|n \bmod L|+|n' \bmod L|$, as before.  Taking here the limit $\nu\to 0$ and then following the same steps as before proves that a constant for the second of the bounds ($a=6$) stated in (\ref{eq:rxcorrdecay}) can be found.

For the final remaining case $a=3$ we can again use symmetry and assume from the beginning that $|n'|\ge |n|$.  We decompose then for $z\le \frac{1}{16}$
\begin{align}
 & f_{3,\nu}(p',p) = -\frac{(z-w_+)^2}{\sqrt{1-8z-4\nu^2}} \frac{1}{w-w_+} \\ \nonumber
&\quad +\frac{1}{w-w_-}\left(\frac{(z-w_+)^2}{\sqrt{1-8z-4\nu^2}}+2 (z-w_+)+w_+ -w_-\right)
\end{align}
and apply the rest of the argument as in the previous case.  This yields a bound
$|\hat{f}_{3,\nu}(n',n)| \le C [\rme^{-\beta_0 k'} + \nu^4 \rme^{-\nu k'} + (1+k')^{-4}]$ where $k'=\max(|n|,|n'|)$, and thus
\begin{align*}
& |g(n',n)| \le  C [M^{-4} + \rme^{-\beta_0 M/2} + L^{-2} \nu^2 + L^{-4}]
\end{align*}
where $M=1+|n \bmod L|+|n' \bmod L|$.  As above, this implies the final bound in (\ref{eq:rxcorrdecay}), for $a=3$.

\end{document}